\documentclass[pra,aps,amsmath,amssymb,twocolumn,10pt]{revtex4-2}

\usepackage{graphicx}
\usepackage[colorlinks,bookmarks=true,citecolor=blue,linkcolor=blue,urlcolor=blue]{hyperref}

\usepackage{hyperref}

\usepackage{bm}
\usepackage{bbold}

\usepackage[resetlabels]{multibib}
\newcites{supp}{Supplementary References}

\renewcommand{\Re}{\mathop{\mathrm{Re}}}
\renewcommand{\Im}{\mathop{\mathrm{Im}}}
\newcommand{\sign}{\mathop{\mathrm{sign}}}
\newcommand{\nudyn}{\nu_\mathrm{dyn}}
\newcommand{\ceff}{c_\mathrm{eff}}
\newcommand{\kc}{k_\mathrm{c}}
\newcommand{\ii}{\mathrm{i}}
\newcommand{\ee}{\mathrm{e}}
\newcommand{\dd}{\mathrm{d}}

\usepackage{amsthm}
\newtheorem{theorem}{Theorem}
\newtheorem{proposition}[theorem]{Proposition}
\newtheorem{corollary}[theorem]{Corollary}

\begin{document}

\title{Spectral-topology-induced criticality in non-Hermitian fermionic metals}

\author{Ayan Banerjee$^{1,a,\dagger}$}

\author{Julius T. Gohsrich$^{1,2,b,\dagger}$}

\author{Flore K. Kunst$^{1,2,c}$}

\affiliation{$^{1}$Max Planck Institute for the Science of Light, 91058 Erlangen, Germany \\$^{2}$Department of Physics, Friedrich-Alexander-Universität Erlangen-Nürnberg, 91058 Erlangen, Germany \\ 
\emph{$^a$\,\href{mailto:ayan.banerjee@mpl.mpg.de}{ayan.banerjee@mpl.mpg.de} \qquad $^b$\,\href{mailto:julius.gohsrich@mpl.mpg.de}{julius.gohsrich@mpl.mpg.de} \qquad $^c$\,\href{mailto:flore.kunst@mpl.mpg.de}{flore.kunst@mpl.mpg.de}} \\
\emph{\phantom{\qquad \qquad \qquad}$^\dagger$These authors contributed equally. \phantom{\qquad \qquad \qquad}}}

\date{\today}

\begin{abstract}
Quantum matter emerges from the interplay of fluctuations, topology, and entanglement, which---in equilibrium---governs quantized transport, universal criticality, and topological classification.
Non-Hermitian systems, widely explored in platforms ranging from electric circuits to photonics, are intrinsically out-of-equilibrium, and display fundamentally new phenomena, including complex spectra, spectral winding, exceptional topology, and non-unitary dynamics.
A central challenge is understanding how the complex single-particle spectrum governs universal many-body behavior.
We introduce a symmetry-protected dynamical topological index derived directly from the complex spectrum.
Through the lens of algebraic topology, more specifically Morse theory, we identify critical points in the spectrum with topological defects, whose curvature and stability are protected under continuous deformations.
This links spectral geometry to many-body observables, unifying non-Hermitian band topology, entanglement, and transport.
We demonstrate that non-Hermitian quantum criticality in non-interacting systems is controlled by gain-and-loss-selected non-equilibrium steady states, which dynamically generate an emergent imaginary Fermi surface whose Fermi points host scale-invariant gapless modes with logarithmic entanglement scaling and algebraic correlations.
Our work establishes a unified framework for non-Hermitian quantum matter, connecting spectral topology to Morse theory, revealing a topological foundation of non-equilibrium quantum criticality.
\end{abstract}

\maketitle

\section{Introduction}

Mathematics has shown an \emph{unreasonable effectiveness} in unveiling deep physical structures and unifying different physical phenomena~\cite{wigner_unreasonable_1960}.
In condensed matter physics, this unifying power is most vividly expressed through topology, which has reshaped our understanding of quantized transport, robustness to disorder, and universal aspects of quantum criticality~\cite{haldane_nobel_2017,kosterlitz_nobel_2017}.
This synergy has been deepened by tools from knot theory~\cite{witten_quantum_1989}, graph theory~\cite{essam_graph_1971}, and algebraic topology~\cite{banerjee_tropical_2023}, revealing hidden patterns in quantum systems.
Over the past decades, topological band theory and quantum entanglement have emerged as complementary approaches to expose underlying geometric structures of collective quantum behavior~\cite{wen_colloquium_2017,sieberer_universality_2025}.
One central assumption underlying much of this progress is Hermiticity, which ensures real spectra and unitary dynamics in closed quantum systems.
Yet many experimentally relevant quantum platforms---from photonic structures~\cite{weidemann_topological_2020} and cold-atom setups~\cite{schreiber_observation_2015} to driven and open many-body systems~\cite{daley_quantum_2014}---are intrinsically non-Hermitian due to coupling with the environment, dissipation, or gain processes.
Such non-Hermitian systems~\cite{bender_real_1998,ashida_non-hermitian_2020,bergholtz_exceptional_2021,banerjee_non-hermitian_2023} admit complex spectra and show fundamentally new phenomenology, such as non-unitary dynamics~\cite{lee_topological_2019}, exceptional points~\cite{kato_perturbation_1966}, the non-Hermitian skin effect~\cite{yao_edge_2018,gohsrich_non-hermitian_2025-1}, and the breakdown of the conventional bulk--boundary correspondence~\cite{kunst_biorthogonal_2018,yao_edge_2018,zhang_correspondence_2020}.

These developments raise a fundamental question: How should quantum criticality be defined when spectra are complex and dynamics are non-unitary?
In non-Hermitian systems, criticality cannot be governed by equilibrium ground states, but instead must relate to non-equilibrium steady states selected dynamically by gain and loss~\cite{lee_topological_2019,panda_entanglement_2020,banerjee_chiral_2022}.
Remarkably, even in absence of unitary conformal invariance, such steady states can exhibit scale invariance, algebraic correlations, and logarithmic entanglement scaling even in one-dimensional systems~\cite{chang_entanglement_2020,panda_entanglement_2020,banerjee_chiral_2022}.
In particular, we put forward the idea that criticality in non-Hermitian systems arises when the steady-state dynamics host scale-invariant gapless modes.

In this work, we identify a topological principle that captures criticality.
We introduce a \emph{dynamical topological index} defined directly from the complex single-particle spectrum and protected by symmetry.
Rooted in algebraic topology, more specifically Morse theory, this index captures an intrinsic imbalance in the spectral curvature and provides a robust, topological count of emergent steady-state Fermi points.
The complex single-particle spectrum defines a mapping from the Brillouin zone to the complex energy plane, and if this mapping is smooth, endowing this closed spectral curve with a natural Morse structure.
Within this framework, critical points of the spectrum act as topological defects whose curvature and stability are protected under continuous deformations, as formalized by Morse theory.
As a result, the discussed dynamical topological index, and in extension algebraic topology, provides a natural language for connecting spectral geometry to emergent scale invariance, entanglement, and transport in non-Hermitian fermionic systems, revealing a topological origin of non-equilibrium criticality.

\clearpage
\section{Spectral topology in non-Hermitian systems}

One of the early questions in non-Hermitian physics was how to extend the topological classification, the famous ten-fold way~\cite{kitaev_periodic_2009,ryu_topological_2010}, due to spectra being complex.
A topological reclassification in line and point gaps provides such a generalization~\cite{shen_topological_2018,kawabata_symmetry_2019}.
In this context, a \emph{line gap} excludes an entire line in the complex energy plane, generalizing the notion of gaps in Hermitian systems, whereas a \emph{point gap} avoids a complex reference energy.
These distinct types of gaps give rise to different topological invariants and symmetry-protected edge phenomena.
Beyond line and point gap topology, other notions of spectral topology have been discussed in the literature~\cite{wang_topological_2021-1,konig_braid-protected_2023,yoshida_winding_2025,stalhammar_abelian_2025,montag_spectral_2026,lein_choosing_2020}.

While all these notions of topology are usually interpreted within a single-particle picture, their physical relevance and phenomenology is ultimately determined by non-unitary dynamics, and might be altered by incorporating many-particle effects.
In this work, we introduce a distinct spectral topological index, which relates the complex single-particle spectrum to physical quantities defined through non-equilibrium steady states, and thus, quantum criticality in fermionic many-body systems.

\subsection{Dynamical topological index}

Let us introduce the anticipated dynamical topological index, and discuss the intuition behind it after the proposition.

\begin{proposition}\label{prop:nudyn}
Let us consider the non-Hermitian Bloch Hamiltonian $\mathcal{H}(k)$ in one spatial dimension, with $k \in (-\pi,\pi]$, being time-reversal symmetric.
Assuming a single band model, $\mathcal{H}(k)$ is equal to the energy $E(k)$, and we set $f(k) = \Im E(k)$ and impose that $f$ is a Morse function and only has simple zeros. Then the dynamical topological index
\begin{equation}
    \label{eq:nudyn}
    \nudyn = - \sum_{\substack{\kc \ \mathrm{s.t. } \ f'(\kc)=0 \\ f(\kc)>0}}\sign\left[ \left. f''(k) \right|_{k = \kc} \right]
\end{equation}
is a non-negative integer measuring the net imbalance of extrema of $f(k)$ above zero.
\end{proposition}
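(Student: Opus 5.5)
The plan is to decompose the superlevel set $\{k : f(k)>0\}$ into connected arcs and show that each arc contributes a fixed non-negative integer to $\nudyn$. First, $\nudyn$ is a well-defined integer. The Brillouin zone $(-\pi,\pi]$ is a circle $S^1$, and $f$ is a Morse function on this compact manifold. Its critical points are therefore nondegenerate and isolated, hence finitely many. Each term $\sign[f''(\kc)]$ is $\pm 1$, so the sum in \eqref{eq:nudyn} is a finite sum of $\pm1$. Note that $-\sign f''(\kc)=+1$ at a local maximum and $-1$ at a local minimum, so $\nudyn=\#\{\text{maxima with } f>0\}-\#\{\text{minima with } f>0\}$.

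\emph{Decomposition into arcs.} Since all zeros of $f$ are simple, the zero set is finite and $f$ changes sign at each zero. I distinguish three cases.
\begin{itemize}
\item[(i)] If $f\le 0$ everywhere, the sum is empty and $\nudyn=0$.
\item[(ii)] If $f>0$ on all of $S^1$, every critical point counts. Because each critical point is nondegenerate, $f'$ changes sign there, so maxima and minima alternate around the circle. Hence they are equal in number and $\nudyn=0$.
\item[(iii)] Otherwise $\{f>0\}$ is a finite disjoint union of open arcs $(a_j,b_j)$ with $f(a_j)=f(b_j)=0$. Since the zeros are simple and $f>0$ inside the arc, $f'(a_j)>0$ and $f'(b_j)<0$.
\end{itemize}

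\emph{Contribution of one arc.} This is the key step. On $(a_j,b_j)$ the derivative $f'$ starts positive and ends negative. Its zeros are exactly the critical points, and at each of them $f'$ changes sign by nondegeneracy. Listing the critical points in order, they alternate max, min, max, $\dots$, and the sequence must begin with a maximum ($f'$ goes $+\to-$) and end with a maximum. In particular there is at least one critical point, namely the maximum of $f$ on $[a_j,b_j]$. Hence each arc contains exactly one more maximum than minimum and contributes $+1$.

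Summing over arcs gives $\nudyn=$ (number of positive arcs) $\ge 0$. Equivalently, $\nudyn$ is half the number of zeros of $f$, which is robust under deformations preserving the Morse and simple-zero conditions. Time-reversal symmetry (e.g.\ $E(k)=E(-k)$, so $f$ is even) is not needed for non-negativity. It only implies that the arcs and critical points come in symmetric pairs about $k=0,\pi$, which could be remarked on. The main obstacle is the careful sign bookkeeping at the arc endpoints, i.e.\ using simplicity of the zeros to fix the signs of $f'$ there; the rest is routine.
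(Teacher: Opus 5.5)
Your proof is correct, but it takes a different route from the paper's.

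\textbf{The paper's route.} It argues globally on the whole circle. Since $\chi(S^1)=0$, the Morse relation $\chi(S^1)=C_0-C_1$ gives equal numbers of minima and maxima. Time-reversal symmetry then pairs each extremum $\kc$ with $-\kc$, which has opposite Morse index and opposite sign of $f$. So maxima above zero are matched with minima below zero.

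\textbf{Your route.} You argue locally on the superlevel set. Simplicity of the zeros fixes $f'>0$ and $f'<0$ at the two ends of each positive arc. Nondegeneracy then forces the critical points in between to alternate max, min, $\dots$, max, so each arc contributes exactly $+1$. In Morse-theoretic terms you are computing $\nudyn=\chi(\{f\ge 0\})$, the Euler characteristic of a disjoint union of closed intervals, instead of $\chi(S^1)$.

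\textbf{What each buys.} Your approach is what actually yields the sign. The identities $C_0=C_1$, $C_1^+=C_0^-$, $C_0^+=C_1^-$ are, as bare counts, compatible with $C_1^+<C_0^+$. Non-negativity therefore has to come from the bookkeeping at the zeros. Your argument makes that explicit; the paper invokes it only loosely, via the intermediate value theorem, in its proof of the Fermi-point relation $n_\mathrm{F}=2\nudyn$. Your route also gives, at no extra cost, $\nudyn=$ number of positive arcs $=$ half the number of zeros, which is exactly $n_\mathrm{F}=2\nudyn$. It also shows the symmetry plays no role in non-negativity. The paper's route, in turn, is what organizes the extrema into the symmetry-related pairs used in its physical interpretation and in the $\{E\}=\{E^*\}$ generalization.

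\textbf{One correction.} In this paper time-reversal symmetry means $\mathcal{H}(-k)=\mathcal{T}\mathcal{H}^*(k)\mathcal{T}^\dagger$. For a single band this gives $E(-k)=E^*(k)$, so $f$ is odd, not even. Your argument never uses the symmetry, so nothing breaks, but the consequences differ from what you wrote.
\begin{itemize}
\item We get $f(0)=f(\pi)=0$, so your cases (i) and (ii) cannot occur.
\item Because the zero at $k=0$ is simple, there is at least one positive arc, so in fact $\nudyn\ge 1$.
\item The map $k\mapsto -k$ sends positive arcs onto negative arcs and exchanges maxima and minima. This is exactly the paper's pairing, not symmetric pairs of positive arcs.
\end{itemize}
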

Let us discuss the intuition behind our choices and assumptions, while we defer the proof to the Methods.
We consider a single-band model as simplest example showing all relevant physical features---the multi-band case reduces to this case with some technicalities as discussed below.
The requirement that the Bloch-Hamiltonian is time-reversal symmetric, i.e., $\mathcal{H}(-k)=\mathcal{T} \mathcal{H}^*(k) \mathcal{T}^\dagger$ with a unitary matrix $\mathcal{T}$ satisfying $\mathcal{T}\mathcal{T}^*=\pm\mathbb{1}$ and the identity matrix~$\mathbb{1}$, leads to the spectral constraint $E(-k)=E^*(k)$~\cite{kawabata_symmetry_2019}.
Therefore, the spectrum of $\mathcal{H}(k)$, $\{E\} =\{ E(k) \mid k\in(-\pi,\pi]\}$, is symmetric around the real axis, $\{E\} =\{E^*\}$, as seen in Fig.~\ref{fig:fig1}(a).
Physically, this spectral symmetry suggests a natural half-filling when introducing fermionic many-particle effects.
Technically, this enforces that $f(k)$ is odd, i.e., $f(-k)=-f(k)$ as illustrated in Fig.~\ref{fig:fig1}(b), which is used in the proof to connect maxima (minima) of $f$ above zero with minima (maxima) of $f$ below zero.
Coming to the multi-band case below, we will relax the constraint of time-reversal symmetry to work in a more general setting.
Requiring~$f$ to be a Morse function~\cite{morse_calculus_1934,milnor_morse_1963} is of technical nature:
It forces that all critical points $\kc$ of~$f$ are non-degenerate, i.e., all $\kc$ with $f'(\kc)=0$ satisfy $f''(\kc) \neq 0$, making all critical points correspond to either minima or maxima, and excludes the possibility that $f$ has saddle points.
This requirement allows us to connect our problem to the field of algebraic topology, more specifically, to Morse theory~\cite{morse_calculus_1934,milnor_morse_1963}.
The requirement to consider simple zeros only is again of technical nature, as we will connect to zero-crossings, and not zero-touchings, in the many-particle case.
Lastly, on the choice considering the imaginary parts of the energies:
In the single-particle picture, the systems long-time dynamics are dominated by the eigenstate corresponding to the energy with the largest imaginary part (or eigenstates when having multiple energies with the largest imaginary parts).
These are either the most amplified, or the slowest decaying eigenstates.
This intuition, that the imaginary part of the eigenstate dictates the long-time dynamics, will transfer to the many-particle case.

\begin{figure}[t]
    \centering
    \includegraphics[]{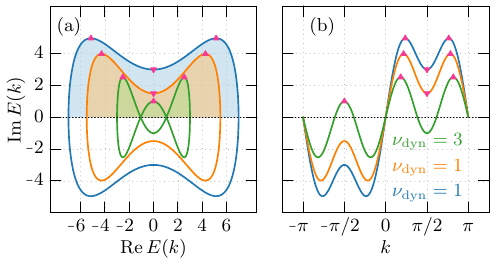}
    \caption{
    Spectral topology and dynamical topological index.
    (a) Different complex spectra (blue, orange, and green curves) for different parameters ($t_1=6$, $t_1=4.5$ and $t_1=2$, respectively) for the Hamiltonian in Eq.~\eqref{eq:H} with $l = r = 3$, $t_{-1} = 1$, $t_{\pm2} = 0$ and $t_{\pm3} = \pm 1$.
    The dynamical topological index is the difference between the number of maxima and number of minima in the upper half plane, and transitions occur when these numbers change.
    The extrema contributing to the dynamical topological index are highlighted by pink triangles, where the up-pointing and down-pointing triangles correspond to maxima and minima, respectively.
    The shaded regions contribute to the non-equilibrium steady state, which connects the dynamical topological index, stemming from the single-particle picture, to system quantities when considering many-particle effects.
    (b) Imaginary part of the dispersion $f(k) = \Im E(k)$ corresponding to (a) and the associated dynamical topological indices.
    Due to time-reversal symmetry, $f$ is an odd function.
    Again the extrema contributing to the dynamical topological index are highlighted by pink triangles.
    }
    \label{fig:fig1}
\end{figure}

Let us briefly summarize the crux of the proof from the Methods.
The argument combines three key ingredients:
(i)~$f$ being odd, (ii)~$f$ being a Morse function, and (iii)~$f$ being periodic.
Together, these ingredients constrain the Euler characteristic of $f$ and enforce that extrema of $f(k)$ necessarily occur in pairs, with each maximum (minimum) in the upper-half plane having a mirrored partner minimum (maximum) in the lower half plane, as shown in Fig.~\ref{fig:fig1}(a).
By isolating the extrema in the upper half plane, which relates to the populated states, the dynamical topological index captures a net spectral curvature imbalance.

\subsection{Example}

We consider a general non-interacting, one-dimensional, single-band tight-binding model with arbitrary long-range hopping with Hamiltonian
\begin{equation}
   H = \sum_{n=1}^N \sum_{j = -r}^{l} t_j \, c^\dagger_{n+j} c_n = \sum_k \mathcal{H}(k) \, c_k^\dagger c_k,
   \label{eq:H}
\end{equation}
where the $t_j$, with $j=-r,\ldots,l$, are hopping strengths, $r$ and $l$ the hopping ranges to the right and left, respectively, $c^\dagger_n$ ($c_n$) creates (annihilates) an excitation (later fermions) at site $n$, and $\mathcal{H}(k) = \sum_{j = -r}^{l} t_j \, \ee^{\ii k j}$ is the Bloch Hamiltonian.
This system is non-Hermitian if any $ t_j \neq t^*_{-j}$.
We assume real hopping strengths $t_j$, which makes the system time-reversal symmetric as required by Proposition~\ref{prop:nudyn}.

For now, we consider $l=r=3$ and parameters $t_{-1}=1$, $t_{\pm2}=0$, and $t_{\pm3}=\pm1$.
Figure~\ref{fig:fig1}(a) shows spectra of the system for different hopping strengths $t_1$.
Starting with the blue closed curve, we find two maxima and a single minimum in the upper half-plane, indicated by pink triangles, resulting in $\nudyn=1$.
One can clearly see the spectral symmetry around the real axis, as well as the fact that every maximum (minimum) in the upper half-plane has a corresponding minimum (maximum) in the lower half-plane.
Decreasing~$t_1$ slightly (orange curve) changes the details of the spectrum slightly, but retains the maxima and minima, and thus, the dynamical topological index.
Further decreasing $t_1$ (green curve) moves the minimum below zero.
Thus, the dynamical topological index is over all increased by two, leading to $\nudyn=3$.

This can also be understood by looking at the imaginary part of the dispersion $\Im E(k)$ in Fig.~\ref{fig:fig1}(b).
For the blue and orange curves one can find a single minimum above zero at $k=\pi/2$, and two maxima symmetric around this minimum.
For the green curve, the minimum at $k=\pi/2$ is below zero, thus not contributing to the dynamical topological index, but the maximum at $k=-\pi/2$---connected via time-reversal symmetry---is above zero, contributing to the index.
Importantly, the topological phase transition is marked by crossing two non-simple zeros at $k=\pm\pi/2$, as further illustrated in Fig.~\ref{fig:fig2}(a).

Finally, we also see in this example that the assumption of $f$~being Morse and only having simple zeros (per Proposition~\ref{prop:nudyn}) does in general not restrict the applicability of our framework to specific systems.
Instead, breaking any of these assumptions for a given system provides a mathematical mechanism to change the dynamical topological index and thus hints at a topological phase transition.

\begin{figure}[t]
\centering
    \includegraphics[trim={0 0 0 0.75cm},clip]{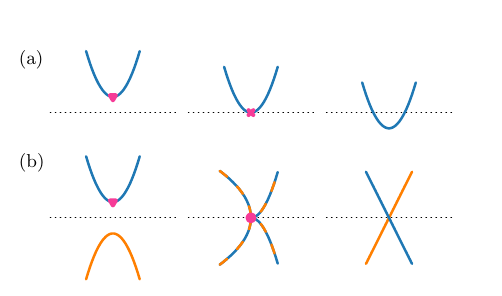}
    \caption{
    Mechanisms changing the dynamical topological index.
    The horizontal axis corresponds to $k$, the vertical axis to $\Im E(k)$, and the dashed horizontal line to the zero-axis $\Im E(k)=0$ as in Fig.~\ref{fig:fig1}(b).
    The minima contributing to the dynamical topological index are highlighted by pink down-pointing triangles.
    (a)~Under a continuous parameter change, the minimum above zero (left) can touch the zero-axis (center), which corresponds to a non-simple zero marked by a pink cross, and can move below zero (right), increasing the dynamical topological index by one.
    In general, by tuning through a non-simple zero, the dynamical topological index can change.
    (b) Additionally in multi-band systems, a minimum above zero and a maximum below zero (left) can touch at the zero axis and form a non-Hermitian degeneracy (center) marked by a pink dot, resulting in a crossing (right) under a continuous parameter variation, over all decreasing the dynamical topological index by one.
    In general, multiple extrema can `annihilate' or can be created by such exceptional point transitions and change the dynamical topological index.}
    \label{fig:fig2}
\end{figure}

\subsection{Comments on multi-band systems}

The intuition of the single-band case carries over to the multi-band case, where the spectrum of a non-Hermitian system consists of one or more closed curves in complex plane.
We decompose the spectrum into its different contributions and treat every contribution as in the single-band case.
Thus, we generalize the dynamical topological index as follows:

\begin{corollary}
Let us again consider a non-Hermitian Bloch Hamiltonian $\mathcal{H}(k)$ being time-reversal symmetric, but this time we consider a multi-bands system.
We require $\mathcal{H}(k)$ to be diagonalizable for all $k$, and diagonalizing it results in the bands $E_j(k)$, where $j$ labels the bands.
We define $f_j(k)= \Im E_j(k)$ and require all $f_j$ to be Morse functions and only have simple zeros.
Then, the dynamical topological index is
\begin{equation}
    \label{eq:nudyn2}
    \nudyn = - \sum_j\sum_{k_{\mathrm{c},j}}\sign\left[ \left. f_j''(k) \right|_{k = k_{\mathrm{c},j}} \right],
\end{equation}
where the sum over $k_{\mathrm{c},j}$ runs over all extrema of the respective band $j$.
\end{corollary}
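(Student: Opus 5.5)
The plan is to reduce the corollary to the mechanism behind Proposition~\ref{prop:nudyn}, applied loop by loop. I read the inner sum in Eq.~\eqref{eq:nudyn2} as the multi-band analogue of Eq.~\eqref{eq:nudyn}: it runs over extrema $k_{\mathrm{c},j}$ with $f_j(k_{\mathrm{c},j})>0$. The claim is then that $\nudyn$ is a non-negative integer counting the net excess of maxima over minima above the real axis.

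\emph{Step 1: decompose the spectrum into closed loops.}
Diagonalizability together with the Morse and simple-zero hypotheses lets me assume that the eigenvalue branches $E_j(k)$ are smooth in $k$. A single band need not be $2\pi$-periodic, because bands can be permuted once $k$ winds around the Brillouin zone. I therefore group the bands into cycles of this permutation. A cycle of length $m$ gives one smooth function $F(q)=\Im E(q)$ on a circle $q\in\mathbb{R}/2\pi m\mathbb{Z}$. The extrema of the $f_j$ above zero are exactly the extrema of the various $F$ above zero, so
\[
\nudyn=\sum_{\text{loops}}\nu_{\text{loop}},
\]
where each $\nu_{\text{loop}}$ is given by Eq.~\eqref{eq:nudyn} with $f$ replaced by $F$. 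Each $F$ is again a Morse function on a circle and has only simple zeros. Time-reversal symmetry, $E_j(-k)=E_{j'}(k)^*$, maps each loop either to itself with $F$ odd, or to a partner loop with opposite imaginary part. This symmetry is not needed for the counting below. It only ensures that half filling is natural and that extrema above zero are mirrored by extrema below zero.

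\emph{Step 2: count on a single loop.}
Since $F$ is Morse, every critical point is a non-degenerate maximum, contributing $-\sign F''=+1$, or minimum, contributing $-1$. On a one-dimensional domain these must alternate along the loop: between two consecutive maxima there is exactly one minimum, and vice versa.

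First suppose $F$ has zeros. They are simple and finite in number, so they cut the circle into maximal open arcs on which $F>0$ or $F<0$. On a positive arc $(a,b)$ we have $F(a)=F(b)=0$, $F'(a)>0$ and $F'(b)<0$. Hence $F$ increases as it leaves $a$ and decreases as it approaches $b$. The first and the last critical point inside the arc are therefore maxima. By alternation, the arc contains exactly one more maximum than minimum, so it contributes exactly $+1$ to $\nu_{\text{loop}}$.

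Next suppose $F$ has no zeros. If $F<0$ everywhere, the loop contributes $0$. If $F>0$ everywhere, alternation on a circle gives equally many maxima and minima, so again the contribution is $0$.

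\emph{Step 3: conclude.}
Combining the cases, $\nu_{\text{loop}}$ equals the number of positive arcs of $F$, which is half the number of zeros of $F$ on that loop. Summing over loops, $\nudyn$ is a non-negative integer equal to the number of disjoint positive intervals of the imaginary spectrum. This is precisely the count of Fermi-point pairs anticipated later in the paper.

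The main obstacle is Step 1: making rigorous that the labels $j$ assemble into smooth closed loops. At band crossings, or when bands are permuted around the Brillouin zone, a naive labelling of $E_j$ can break smoothness. I would handle this by working on the spectral curve $\{(k,E):\det(E-\mathcal{H}(k))=0\}$. Under the assumption of no exceptional points, this curve is a disjoint union of smooth circles covering the Brillouin zone, and the Morse and simple-zero conditions are imposed directly on $\Im E$ on each component. Steps 2 and 3 are then elementary one-dimensional Morse theory; the alternation argument is the circle's Euler characteristic $0$ restricted to arcs.
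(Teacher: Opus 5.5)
Your argument is correct, and your Step~1 is the paper's own reduction. The Methods likewise pass to an extended Brillouin zone and glue the bands $E_{j_1},\dots,E_{j_\nu}$ that make up one closed spectral curve into a single parametrization $\tilde f_{\tilde\jmath}$ on $(-\pi\nu,\pi\nu]$. They then sum the single-loop indices, with the same tacit assumption that the branches glue smoothly; you at least flag that assumption explicitly.

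The two routes differ in the per-loop count. The paper simply reinvokes the proof of Proposition~\ref{prop:nudyn}: $\chi(S^1)=0$ gives $C_0=C_1$ for the Morse function on the loop, and time-reversal symmetry pairs each extremum above zero with an extremum of opposite type below zero. You instead cut the loop at its simple zeros and use the alternation of non-degenerate extrema on each arc. Every positive arc then contributes exactly $+1$, and a zero-free loop contributes $0$.

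Your argument gives more than the paper's. The identity $C_0=C_1$ only constrains the total number of extrema; in the single-band case it even follows automatically from the pairing $C_1^+=C_0^-$, $C_0^+=C_1^-$. It does not fix the sign of $C_1^+-C_0^+$. Your arc count does three things the paper's argument does not:
\begin{itemize}
\item it genuinely proves that $\nudyn$ is a non-negative integer;
\item it shows no symmetry is needed for this, only the Morse and simple-zero hypotheses, consistent with the paper's remark that $\{E\}=\{E^*\}$ suffices;
\item it gives $n_\mathrm{F}=2\nudyn$ of Eq.~\eqref{eq:nF2nudyn} loop by loop, rather than through the paper's separate intermediate-value argument.
\end{itemize}
In return, the paper's formulation keeps the Euler-characteristic and Morse-theoretic framing it uses to motivate robustness, together with the symmetry pairing that underlies half filling.

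One wording fix: in Step~3, say `half the number of zeros' rather than `the number of disjoint positive intervals of the imaginary spectrum'. A loop lying entirely in $\Im E>0$, such as the anisotropic SSH model at $\gamma=7/4$ in the SI, is a positive component that contributes $0$, exactly as your own Step~2 shows.
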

The extra assumption in the multi-band case is that $\mathcal{H}(k)$ must be diagonalizable for all $k$, which is by definition always possible away from exceptional points~\cite{kato_perturbation_1966,ashida_non-hermitian_2020,bergholtz_exceptional_2021}.
At an exceptional point, not all $f_j$ may be smooth, breaking the assumption that $f_j$ is Morse, cf. Supplementary Information (SI).

However, an exceptional point can be seen as a mechanism which can change the dynamical topological index.
For example, an exceptional point can `annihilate' a minimum of a band above zero with a maximum below zero, changing the index.
The mechanisms changing the dynamical topological index, namely, an extremum crossing the zero axis through a non-simple zero, and such an exceptional point transition, are shown in Fig.~\ref{fig:fig2}.

We furthermore can relax the constraint of time-reversal symmetry to the spectral constraint $\{E\}=\{E^*\}$, where $\{E\} = \{E_j(k)\mid k\in(-\pi,\pi],\forall j\}$.
This essentially enforces that the entire spectrum is symmetric around the real axis.
This spectral constraint, for example, follows from time-reversal symmetry as before, or pseudo-Hermiticity, which is a similarity of which $\mathcal{PT}$-symmetry and pseudo-Hermitian symmetry are special cases~\cite{montag_essential_2024-1}.

In the SI, we provide single- and two-band models, which satisfy this spectral constraint, and also feature the change of dynamical topological index by crossing an exceptional point.
For the remaining examples in the manuscript, we will consider time-reversal symmetric single-band models, i.e., given by Eq.~\eqref{eq:H} with real hoppings.

\section{Many-particle effects and non-unitary dynamics}

In Hermitian systems including many-particle effects, in particular, non-interacting many-body systems, the notion of equilibrium stems from real spectra which imply stable quasiparticles and time-independent statistical ensembles.
Non-Hermitian systems, however, are in general out-of-equilibrium as quasiparticles are not stable due to finite lifetimes introduced by the imaginary parts of many-body energies.
In such a non-equilibrium description, the dynamical topological index will manifest itself dynamically through non-unitary time evolution.
Out of equilibrium, the relevant long-time many-body state is the non-equilibrium steady state, which selectively amplifies single-particle modes with positive imaginary energies, defining an imaginary Fermi surface absent in equilibrium physics.
In particular, the non-equilibrium steady state is controlled by the imaginary Fermi surface, where long-lived gapless excitations emerge, which define a non-Hermitian fermionic metal.
Consequently, the dynamical topological index, defined through the single-particle spectrum, directly determines universal many-body properties such as entanglement scaling, steady-state currents and correlations.

\subsection{Non-equilibrium steady state}

Let us now construct the non-equilibrium steady state of a non-Hermitian free-fermion system for finite filling as in Refs.~\cite{panda_entanglement_2020,banerjee_chiral_2022}.
To this end, we consider a generic non-Hermitian quadratic fermionic Hamiltonian in real space of the form $H = \phi^\dagger \mathcal{H}\, \phi$,
where $\phi^\dagger = (c_1^\dagger, \ldots, c_M^\dagger)$ and $\phi = (c_1, \ldots, c_M)^T$ are vectors of fermionic creation and annihilation operators, respectively.
These operators satisfy the canonical anti-commutation relations $\{c_m, c_{m'}^\dagger\} = \delta_{mm'}$.
As $\mathcal{H}$ is an $M\times M$ non-Hermitian single-particle Hamiltonian, its diagonalization requires a biorthogonal basis.
Its right and left eigenvectors are
\begin{align}
    \mathcal{H}\, |R_\alpha\rangle &= E_\alpha |R_\alpha\rangle,
    &
    \langle L_\alpha|\, \mathcal{H} &= E_\alpha \langle L_\alpha|,
\end{align}
respectively, where $E_\alpha$ are, in general, complex energies.
The left and right eigenvectors are biorthonormalized as $\langle L_\alpha | R_\beta \rangle = \delta_{\alpha\beta}$. Using this biorthogonal basis, we introduce the transformed fermionic operators $ \tilde{c}_\alpha^\dagger = \sum_m \langle i | R_\alpha \rangle \, c_m^\dagger$ and $\tilde{c}_\alpha = \sum_m \langle L_\alpha | m \rangle \, c_m$, which preserve the fermionic anti-commutation relations as a consequence of the biorthonormality of the eigenvectors.
In terms of these operators, the Hamiltonian becomes diagonal as $ H = \sum_{\alpha=1}^{M} E_\alpha \, \tilde{c}_\alpha^\dagger \tilde{c}_\alpha$.

With that, any many-body (right) eigenstate at finite filling with $P$ fermions is constructed as $ |\Psi_p\rangle = \tilde{c}^\dagger_{p_1}\tilde{c}^\dagger_{p_2} \cdots \tilde{c}^\dagger_{p_P} |\Omega\rangle$, where $|\Omega\rangle$ is the vacuum state, and $p=(p_1,\ldots,p_P)$ is the multi-particle index enumerating the configuration of $p_j$ over $N$ sites.
This eigenstate has many-body energy $\mathcal{E}_p=\sum_{j=1}^{P} E_{p_j}$, and thus evolves under the non-unitary time evolution as $|\Psi_p(t)\rangle = \ee^{- \ii \mathcal{E}_p t} |\Psi_p(0)\rangle$.
Importantly, non-unitary evolution necessitates renormalization of the many-body state during time evolution.

Consequently, under such a non-unitary time evolution, the system at finite filling asymptotically relaxes into a non-equilibrium steady state with a total energy $\mathcal{E}_{\tilde{p}}$ with maximum total imaginary energy out of all many-body states formed by occupying $P$ of the $N$ single-particle modes in configuration~$\tilde{p}$.
For later reference, we define this non-equilibrium steady state (labeled by NESS)~\cite{panda_entanglement_2020,banerjee_chiral_2022} at finite filling as
\begin{equation}
    | \Psi_\mathrm{NESS} \rangle \sim  |\Psi_{\tilde{p}}(t\rightarrow \infty)\rangle,
\end{equation} 
which is uniquely selected by the complex spectrum of~$\mathcal{H}$.

\subsection{Non-Hermitian Fermi physics}

This construction of the non-equilibrium steady state justifies its interpretation as a dynamical Fermi sea formed by the single-particle modes with the largest imaginary energies.
In particular, the imaginary Fermi surface $\Im E (k) = 0$ naturally emerges as the boundary between dynamically suppressed and dynamically amplified modes.
Time-reversal symmetry, discussed above, forces the spectrum to be symmetric around the real axis, i.e., $\Im E (k) = 0 $, which suggests half-filling as a natural choice for a finite-filled system.
Then, Fermi points are the momenta $k_\mathrm{F}$ at which the highest-occupied single-particle modes of the non-equilibrium steady state touch this boundary, i.e., $\Im E(k_\mathrm{F})=0$, thereby separating dynamically amplified (occupied) from dynamically suppressed (unoccupied) modes.

Along the imaginary Fermi surface, excitations above the non-equilibrium steady states correspond to modes with finite group velocity and form robust unidirectional chiral channels. 
The Nielsen--Ninomiya theorem enforces fermion doubling~\cite{nielsen_no-go_1981}:
Chiral modes must appear in pairs of opposite chirality, forbidding net chirality.
While the Nielsen--Ninomiya theorem formally holds in non-Hermitian systems, our systems dynamically evade it, as the distinction into amplifying and decaying modes provides a mechanism for asymmetry.
Let us connect the dynamical topological index to this Fermi physics, before coming back to this point.

\begin{corollary}
The number of Fermi points $n_\mathrm{F}$ is twice the dynamical topological index, i.e.,
\begin{equation}
    \label{eq:nF2nudyn}
    n_\mathrm{F} =2 \nudyn.
\end{equation}
\end{corollary}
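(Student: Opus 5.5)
The plan is to reduce the statement to elementary counting of zeros and extrema of $f(k)=\Im E(k)$ on the circle $S^1=(-\pi,\pi]$, under the hypotheses of Proposition~\ref{prop:nudyn}: $f$ is smooth and periodic, odd by time-reversal symmetry, Morse, and has only simple zeros.

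First I would identify the Fermi points with zeros of $f$. By the definition above, at half filling the non-equilibrium steady state occupies exactly the modes with $\Im E(k)>0$. Its Fermi points $k_\mathrm{F}$ are the momenta with $\Im E(k_\mathrm{F})=0$ that separate occupied from unoccupied modes. Because all zeros of $f$ are simple, $f$ changes sign at each of them. Hence every zero is a genuine boundary between amplified and suppressed modes, and
\begin{equation*}
n_\mathrm{F}=\#\{k\in S^1 : f(k)=0\}.
\end{equation*}
Oddness gives $f(0)=0$ and $f(\pi)=f(-\pi)=-f(\pi)=0$. So the zero set is non-empty, finite by simplicity and compactness, and cuts $S^1$ into finitely many open arcs. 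Since $f$ changes sign at every zero, the signs of these arcs alternate around the circle. The number of arcs is therefore even, say $2m$, and exactly $m$ of them are positive arcs, on which $f>0$. Since the number of arcs equals the number of zeros, $n_\mathrm{F}=2m$.

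Next I would show that $\nudyn=m$, by computing the contribution of each positive arc separately. Take a positive arc $(a,b)$ with $f(a)=f(b)=0$, $f'(a)>0$ and $f'(b)<0$. Because $f$ is Morse, its critical points in $[a,b]$ are finitely many and non-degenerate, and none of them is $a$ or $b$. Between consecutive critical points $f'$ has constant sign, and at each non-degenerate critical point $f'$ changes sign. The sign of $f'$ starts positive near $a$ and ends negative near $b$, so the critical points alternate max, min, max, $\dots$, max. Such an arc contains $q+1$ maxima and $q$ minima for some $q\ge 0$. Since $\sign f''=-1$ at maxima and $+1$ at minima, the arc contributes $-\sum \sign f''(\kc)=(q+1)-q=1$ to Eq.~\eqref{eq:nudyn}. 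Every critical point with $f(\kc)>0$ lies in exactly one positive arc, so summing over arcs gives $\nudyn=m$. This also recovers the non-negativity claimed in Proposition~\ref{prop:nudyn}.

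Combining the two counts gives $n_\mathrm{F}=2m=2\nudyn$. Oddness enters only to guarantee that zeros exist, so that the circle really splits into arcs. Without a zero there would be no alternation, and the sum in Eq.~\eqref{eq:nudyn} would instead count $\#\max-\#\min$ over the whole circle, which is $0$ by the Euler characteristic $\chi(S^1)=0$. The step I expect to need the most care is the first one, the physical identification of Fermi points with all sign changes of $f$. This uses that the steady state at half filling is exactly the set $\{f>0\}$, and that simple zeros rule out tangential touchings, which would not separate occupied from unoccupied modes. Once that is fixed, the rest is the alternation argument above.
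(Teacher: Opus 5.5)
Your proof is correct, and it takes a genuinely different and tighter route than the one in the Methods.

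The paper argues globally. It uses $\chi(S^1)=C_0-C_1=0$ for all extrema on the circle and the time-reversal pairing $\kc\leftrightarrow-\kc$ of extrema with opposite Morse index. It then asserts, via the intermediate value theorem, that each unit of the upper-half-plane imbalance $C_1^+-C_0^+$ forces two zero crossings ``related by symmetry''.

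You argue locally instead. You cut $S^1$ at the simple zeros and show that each connected component $(a,b)$ of $\{f>0\}$ contributes exactly $+1$ to Eq.~\eqref{eq:nudyn}. This holds because $f'(a)>0$, $f'(b)<0$ and non-degeneracy force the pattern max, min, \dots, max. Sign alternation then gives exactly two zeros per positive arc. In Morse-theoretic terms, you compute $\nudyn$ as the Euler characteristic of the superlevel set $\{f\geq 0\}$, a disjoint union of closed intervals. This is the precise statement behind the paper's Euler-characteristic heuristic.

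Your version has several advantages. Every step is justified, and the non-negativity of $\nudyn$ claimed in Proposition~\ref{prop:nudyn} drops out. It also shows that $n_\mathrm{F}=2\nudyn$ uses only periodicity, the Morse property and simplicity of zeros. Oddness is needed only to identify the half-filled steady state with $\{f>0\}$, and your remark that the zero-free case gives $0=2\cdot 0$ is right.

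Your version also avoids the symmetry pairing, which cannot be what matches crossings to units of $\nudyn$. The map $k\mapsto -k$ sends positive arcs to negative arcs. Moreover, the two endpoints of a positive arc are never partners of each other, because any arc between $a$ and $-a$ contains $0$ or $\pi$, where $f$ vanishes.

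What the paper's framing offers in return is mainly the narrative link to the global constraint $\chi(S^1)=0$ and to the $\pm k_\mathrm{F}$ structure of the Fermi points, neither of which your count needs.
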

As shown in the Methods, this is ensured by the intermediate value theorem and the requirement that $f$ only has simple zeros.
This relation originates in the interplay of time-reversal symmetry, non-unitary dynamics, and the properties of the non-equilibrium steady state:
Each unit of dynamical topological index forces the complex dispersion to cross $\Im E(k)=0$ twice, thereby producing a pair of symmetry-related Fermi points at $\pm k_\mathrm{F}$, cf. Fig.~\ref{fig:fig1}.
Although this pairing reflects a remnant of fermion doubling, non-unitary dynamics selects amplifying modes, allowing these crossings to act as gapless and dynamically protected excitations above the non-equilibrium steady state.
In this sense, $\nudyn$ quantifies a dynamical violation of the Nielsen--Ninomiya theorem, stabilized by non-unitary evolution rather than band topology alone.
Thus,~$\nudyn$ provides a robust, topologically protected measure of chiral asymmetry.
Over all, this allows interpreting the system as a \emph{symmetry-protected fermionic chiral metal}~\cite{ying_symmetry-protected_2018,bessho_nielsen-ninomiya_2021,banerjee_chiral_2022}. 

Furthermore, one can view the dynamical topological index as a quantifier of the net curvature-weighted imbalance of these extrema.
This asymmetry gives rise to an intrinsic topological tension between spectral curvature and band continuity, which is resolved through the nucleation of paired imaginary Fermi points.

\section{Entanglement scaling and effective central charge}

These dynamically protected Fermi points, characterized by real group velocities, support long-lived propagating modes and contribute to robust steady-state currents.
The same Fermi points also govern logarithmic entanglement scaling, which is characterized by an \textit{emergent} effective central charge.
To characterize quantum correlations in the non-equilibrium steady state, we compute the fermionic correlation matrix restricted to a contiguous subsystem $A$ of size $\ell$.
The correlation matrix has elements $C_{i,j} = \langle \Psi_\mathrm{NESS} | c_i^\dagger c_j | \Psi_\mathrm{NESS} \rangle$.
Its eigenvalues $e_i$ fully determine the entanglement entropy via Peschel's prescription~\cite{peschel_calculation_2003,calabrese_entanglement_2009}, as $S_\ell = -\sum_i \left[ e_i \log e_i + (1-e_i)\log(1-e_i) \right]$ quantifies the entanglement between the retained and traced-out regions.
The scaling behavior of $S_\ell$ serves as a diagnostic of criticality of the non-equilibrium steady state.
Under periodic boundary conditions, and in analogy with equilibrium conformal field theories, the entanglement entropy exhibits logarithmic scaling
\begin{equation}
    S_\ell = \frac{\ceff}{3}
    \log\!\left[\frac{N}{\pi}\sin\!\left(\frac{\pi\ell}{N}\right)\right]
    + \mathrm{const.},
\end{equation}
where $\ceff$ is the so-called effective central charge~\cite{peschel_calculation_2003,calabrese_entanglement_2009,chang_entanglement_2020}.
This logarithmic scaling is illustrated in the inset of Fig.~\ref{fig:fig4}.
In non-Hermitian systems, $\ceff$ can differ from the unitary conformal field theory value and encodes universal features of the entanglement structure of the non-equilibrium steady state~\cite{chang_entanglement_2020}.
Physically, the logarithmic scaling originates from gapless modes stemming from the Fermi points.
Although the non-equilibrium steady state is dominated by modes with positive imaginary energies, it is these modes that control universal correlations, in direct analogy with the role of Fermi points in Hermitian critical systems.
As shown below, their number is fixed by the dynamical topological index introduced earlier:
\begin{corollary}
The effective central charge governing the entanglement scaling of the non-equilibrium steady state at half filling equals the dynamical topological index, i.e.,
\begin{equation}
    \ceff = \nudyn.
\end{equation}
\end{corollary}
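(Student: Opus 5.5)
The plan is to reduce the statement to the previous corollary, $n_\mathrm{F}=2\nudyn$, and then invoke the standard result for free-fermion Slater determinants: a state whose occupied set in momentum space is a finite union of intervals has an entanglement entropy whose logarithmic coefficient is fixed by the number of Fermi points.

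\emph{Step 1: the correlation matrix.} Under periodic boundary conditions the single-band Hamiltonian~\eqref{eq:H} is translation invariant. Its right and left eigenvectors are therefore plane waves, $\langle m|R_k\rangle=\ee^{\ii km}/\sqrt{N}$ and $\langle L_k|m\rangle=\ee^{-\ii km}/\sqrt{N}$. The biorthogonal basis thus coincides with the ordinary Fourier basis, and the NESS is an ordinary orthonormal Slater determinant. Its occupied set is $\Omega=\{k : f(k)>0\}$, because at half filling the modes with largest imaginary energy are selected, and by oddness of $f$ exactly half the momenta, up to simple zeros, have $f>0$.

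It follows that $C_{ij}=\frac{1}{N}\sum_{k\in\Omega}\ee^{-\ii k(i-j)}$. This is a Toeplitz matrix whose symbol is the indicator function $\chi_\Omega(k)$. Because $f$ has only simple zeros, $\Omega$ is a finite union of disjoint arcs of the circle. Their endpoints are exactly the Fermi points, and by the previous corollary there are $2\nudyn$ of them, so $\Omega$ consists of $\nudyn$ arcs.

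\emph{Step 2: the logarithmic coefficient.} Apply the Fisher--Hartwig asymptotics for Toeplitz determinants with piecewise-constant symbols, via the Jin--Korepin/Widom argument that was extended to several intervals by Keating--Mezzadri and by Gioev--Klich. For $S_\ell$ computed from the eigenvalues of $C_A$, each jump discontinuity of the symbol contributes $\frac{1}{6}\log\ell$. With $n_\mathrm{F}$ jumps this gives
\begin{equation}
S_\ell=\frac{n_\mathrm{F}}{6}\log\ell+O(1).
\end{equation}
The $O(1)$ term depends on the arc lengths but not on $\ell$. Comparing with the definition $S_\ell=\frac{\ceff}{3}\log\ell$, and passing to the chord length $\frac{N}{\pi}\sin(\pi\ell/N)$ for finite periodic systems, gives $\ceff=n_\mathrm{F}/2=\nudyn$.

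\emph{Main obstacle.} The delicate point is the justification in Step 1 that the NESS is a genuine orthonormal Fermi sea. The obstacle is not the Toeplitz asymptotics themselves, which are standard.

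Two things must be checked. First, the transformed operators $\tilde c_k$ must reduce to ordinary Fourier modes under periodic boundary conditions; with open boundaries the skin effect would spoil this. Second, the maximizer of $\sum_{k\in p}\Im E_k$ over $N/2$-particle configurations must be exactly $\Omega$. The occupied set must not differ from $\Omega$ near the simple zeros by more than $O(1)$ modes, since such a discrepancy would not alter the coefficient of $\log\ell$.

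I would handle the second point by noting that the maximizer at half filling fills the $N/2$ largest values of $f$. By $f(-k)=-f(k)$, the threshold for this set is zero, so the maximizer is $\Omega$ up to boundary modes. Such boundary modes shift $\chi_\Omega$ by $O(1/N)$ in arc length, which leaves the jump count unchanged.
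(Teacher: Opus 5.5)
Your argument is correct and has the same skeleton as the paper's: reduce to $n_\mathrm{F}=2\nudyn$, then show that each pair of Fermi points adds one unit to $\ceff$. You justify the second step quite differently, though.

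The paper simply appeals to low-energy field theory: pairs of Fermi points form Dirac-like modes whose central charges add, citing Calabrese--Cardy. That is heuristic in a setting which, as the paper itself stresses, lacks unitary conformal invariance.

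You instead make an observation the paper never states. Under periodic boundary conditions the biorthogonal eigenbasis collapses to the Fourier basis, so the NESS is an ordinary orthonormal Fermi sea. Its occupied set is $\{f>0\}$ up to $O(1)$ modes at zeros of $f$, such as the tie between $k=0$ and $k=\pi$ at half filling for even $N$. You then apply Fisher--Hartwig/Widom asymptotics to the indicator symbol, which gives $\tfrac{1}{6}\log\ell$ per jump.

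Your treatment of the extra zero modes can be made cleaner than your arc-length remark. Each added mode is a rank-one positive change of the restricted correlation matrix, so by eigenvalue interlacing it shifts $S_\ell$ by at most a constant.

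What your route buys is a genuine lattice derivation. It also makes explicit that the result is purely kinematic: non-Hermiticity only selects which modes are filled, and the log coefficient depends on the number of jumps, not on Fermi velocities or any field-theoretic assumption.

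The paper's route is looser but is phrased independently of the number of bands. Yours relies on a scalar Toeplitz symbol. Extending it to multi-band models would require block-Toeplitz asymptotics, with matrix symbol equal to the orthogonal projector onto the occupied right eigenvectors at each $k$.

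Finally, Fisher--Hartwig fixes the coefficient only in the regime $1\ll\ell\ll N$. The chord-length form is a fitting ansatz rather than an output of your argument, though this does not affect the identification of $\ceff$.
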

Each unit increase of the dynamical topological index $\nudyn$ enforces the appearance of two imaginary Fermi points (cf. Eq.~\eqref{eq:nF2nudyn}).
Within low-energy field theory, pairs of Fermi points form Dirac-like gapless modes whose contributions add to the central charge governing logarithmic entanglement scaling in one-dimensional critical systems ~\cite{calabrese_entanglement_2004,calabrese_entanglement_2009}.
Consequently, the total central charge is fixed by the number of dynamically protected gapless modes.
Thus, $\nudyn$ serves as a topologically protected measure of entanglement scaling in the system.

\section{Transport and correlations}

Beyond entanglement scaling, transport and correlations provide a dynamical characterization of the non-equilibrium steady state as a non-Hermitian fermionic metal.

\vspace{-0.5cm}

\subsection{Persistent current}

The metallic behavior is tied to the above discussed Fermi physics, which stabilizes gapless, current-carrying modes absent in bosonic systems.
As discussed above, our system dynamically evades the Nielsen--Ninomiya theorem, thereby dynamically breaking time-reversal symmetry, by selectively amplifying one species of chiral modes.
These current-carrying modes give rise to a finite persistent current in the nonequilibrium steady state.

Assuming a single-band model and assigning unit charge to each fermion, the persistent current~\cite{zhang_correspondence_2020} is
\begin{equation}
    J 
    = \int_{-\pi}^{\pi} \! \! \! n(E) \partial_k E \, \dd k 
    = \oint_\mathrm{BZ} \! \! \! n(E)\, \dd E,
    \label{eq:current}
\end{equation}
where $E=E(k)$ traces the complex spectrum as $k$ spans the Brillouin zone (BZ).
The occupation function $n(E)$ in the non-Hermitian case,
$n(E) = \{1+ \exp{(-\beta[\Im E(k)-\mu)]}\}^{-1}$
%
%
with inverse temperature $\beta = 1/T$ and chemical potential \mbox{$\mu = 0$} at half-filling~\cite{zhang_correspondence_2020,banerjee_chiral_2022}, depends only on the imaginary part of the energy and preferentially populates amplifying modes.
For simplicity, we set $T \to 0$ so that $n(E)$ is a step function that only selects states with positive imaginary part.
Associated to the persistent current there are also the number densities $n_{\mathrm{L},\mathrm{R}}$, i.e., the fraction of left- and right-moving modes, respectively, so that $n_L+n_R=1$.
They are given by $n_{\mathrm{L},\mathrm{R}}=|\mathcal{K}_{\mathrm{L},\mathrm{R}}|/(|\mathcal{K}_{\mathrm{L}}|+|\mathcal{K}_{\mathrm{R}}|)$, where $\mathcal{K}_{\mathrm{L},\mathrm{R}}=\{ k \mid \Re\left[\partial_k E(k)\right]\lessgtr 0 \wedge \Im E(k)\geq0\}$.
While the current $J$ is continuous under continuous parameter changes, we find it is not differentiable at points where the spectral topology changes, i.e., we find non-analytic behavior at topological phase transitions.
Physically, this reflects the sudden appearance or disappearance of Fermi points, whose associated current-carrying modes contribute to the persistent current.
Therefore, the persistent current provides a sensitive and experimentally accessible probe of transitions of the spectral topology.

\begin{figure}[t]
    \centering
    \includegraphics[]{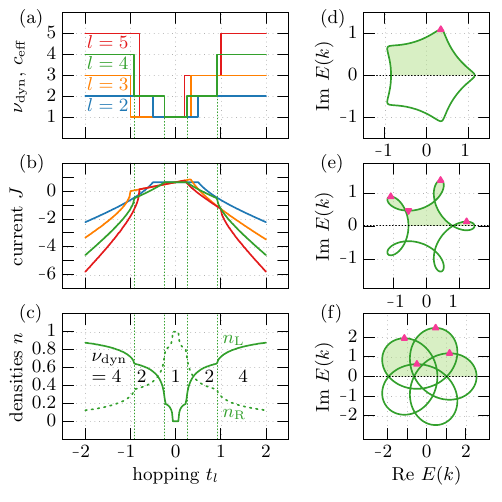}
    \caption{
    Spectral topological phase transitions and transport phenomena in a generalized Hatano--Nelson model.
    (a) Dynamical topological index $\nudyn$ and central charge $\ceff$ as function of $t_l$ for different $l$, showing topological phase transitions at distinct transition points.
    The green-dashed vertical lines mark the hopping strengths~$t_l$ at which the dynamical topological index for the model with $l=4$ changes.
    These lines are also present in (b,d) to highlight qualitative changes at these hopping strengths due to these phase transitions for the model with $l=4$.
    (b) Total current as a function of $t_l$ with same $l$ as in (a), showing the non-analytic behavior at the same  parameters where the topological phase transitions happen in~(a).
    (c) Number densities of left- ($n_\mathrm{L}$) and right-movers ($n_\mathrm{R}$) as functions of $t_l$ for $l=4$, showing the same non-analytic behavior.
    \mbox{(d-f)}~Complex spectra for $l=4$ with $t_4=0.15$, $t_4=0.45$ and $t_4=1.55$, respectively, illustrating the evolution of spectral topology across different phases.
    In all panels, we have set $t_{-1}=1$ and $t_j=0$ for all $j\neq l,-1$.
    }
    \label{fig:fig3}
\end{figure}

This behavior is exemplified in Fig.~\ref{fig:fig3}.
As a model, we again have used the one-dimensional single-band Hamiltonian, Eq.~\eqref{eq:H}, where we vary the hopping range to the left $l$ and the associated hopping strength $t_l$, while fixing $r=1$, $t_{-1}=1$ and set all the other hopping strengths to zero, corresponding to the Bloch Hamiltonian $\mathcal{H}(k) = t_l \ee^{\ii kl}+\ee^{-\ii k}$.
This model has recently been investigated as a generalized version of the Hatano--Nelson model~\cite{hatano_localization_1996,gohsrich_exceptional_2024}, and it exhibits a generalized chiral symmetry~\cite{marques_generalized_2022}, resulting in rotational symmetric complex spectra.

Figure~\ref{fig:fig3}(a) shows the dynamical topological index and the identical central charge as a function of~$t_l$ for different~$l$.
Jumps are clearly visible at different~$t_l$, corresponding to changes in the dynamical topological index and effective central charge.
In Fig.~\ref{fig:fig3}(b), one can see the persistent current~$J$ for the corresponding~$t_l$ and~$l$.
At the jumps in (a),~$J$ shows non-analytic behavior in (b).
For $l=4$, the spectral transitions occur at $t_4=\pm1/4$ and $t_4=\pm\sqrt{27/32} \approx \pm 0.92$.
Additionally, we show the number densities of the left- and right movers, $n_{\mathrm{L},\mathrm{R}}$ respectively, for $l=4$ in Fig.~\ref{fig:fig3}(c).
Beyond the non-analytic points stemming from transitions in the spectral topology, there are additional non-analytic points close to zero where the number densities become constant.
\mbox{Figures \ref{fig:fig3}(d-f)} show different complex spectra for $t_4=0.15$, $t_4=0.45$ and $t_4=1.55$, respectively.
Beyond the apparent rotational symmetry of the spectra, stemming from the generalized chiral symmetry, one can clearly observe the changes of the spectral topology.

\begin{figure}[t]
\centering
    \includegraphics[]{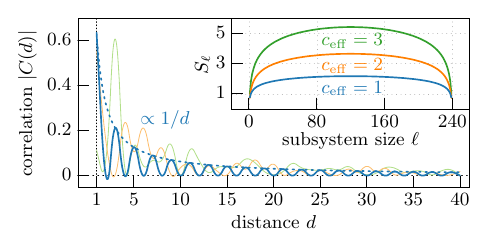}
    \caption{
    Correlations and entanglement entropy.
    The system exhibits power-law correlations $\propto 1/d$ with distinct interference patterns as a function of distance~$d$, indicating a gapless critical phase and scale invariance.
    We quadratically interpolate between the different integer $d$ for better visibility.
    Inset:
    Corresponding entanglement entropies $S_\ell$ showing logarithmic scaling as function of subsystem size $\ell$ and distinct central charges.
    We have set $t_{\pm1}=1\pm t$, $t_{\pm2}=\pm0.2$, $t_{\pm3}=\mp1.9$, $t_{\pm4}=\pm1$, with $t=11$ (blue), $t=6$ (orange), $t=1$ (green).
    }
    \label{fig:fig4}
\end{figure}

\subsection{Correlations}

In Hermitian many-body systems, long-range correlations provide a direct probe of criticality and scale invariance.
In gapped phases, correlations decay exponentially, reflecting the presence of a finite correlation length.
In contrast in gapless phases, algebraic (power-law) decay signals the absence of a characteristic length scale and is a defining feature of criticality, including metallic and conformal phases.
In one-dimensional fermionic systems, such behavior is typically governed by the structure of the Fermi surface:
Gapless excitations generate long-wavelength coherence, while the number and the arrangement of Fermi points control oscillations and decay exponents of correlation functions.
In our non-Hermitian framework, this paradigm is reshaped by non-unitary dynamics, the relevance of the imaginary Fermi surface, and the emergence of the non-equilibrium steady state.
Here, scale invariance arises not from the gap closing, but from the spectral topology of the complex dispersion with the emergent Fermi points and the dynamical selection of modes with maximal imaginary energy.
Consequently, correlation functions encode not only coherence but also the underlying dynamical topology that organizes the steady state.

To understand this mechanism in more detail, we consider the fermionic two-point correlator of the non-equilibrium steady state $C(d)=C_{i,i+d}$ for any $i$, where $C_{i,j}$ are the correlation matrix elements from above.
Topological transitions in the spectrum reshape these interference patterns, thereby controlling both correlations and entanglement through the dynamical topological index.
In regimes where the system exhibits multiple Fermi points, $|C(d)|$ displays oscillatory behavior with algebraic decay, reflecting interference between coherent contributions from different momentum modes as illustrated in Fig.~\ref{fig:fig4}.
When varying system parameters, the number and position of these Fermi points undergo topological transitions, which can significantly modify the interference pattern---enhancing or suppressing correlations at specific distances through constructive or destructive interference.
Thus, these transitions lead to notable changes in the correlation profile and entanglement entropy, governed by the dynamical topological index.

\section{Conclusions}

Our work opens several promising avenues for future exploration.
While we have formulated a dynamical-topological framework for one-dimensional non-Hermitian systems, a natural and highly nontrivial extension lies in generalizing this construction to higher-dimensional systems.
Such generalizations would allow discussing the inclusion of fluxes, disorder, interactions~\cite{zhang_symmetry_2022-1,kawabata_many-body_2022-1} and symmetry breaking~\cite{zhang_symmetry_2022-1}.
In this context, it would be particularly interesting to explore links to self-healing and non-unitary dynamical stabilization phenomena~\cite{longhi_self-healing_2022}.
Looking ahead, our approach may be extended to encompass other ramified non-Hermitian symmetry classes, connections to $K$-theoretic classifications, and a deeper understanding of dynamical topology.

As a promising experimental avenue, our predictions might be accessible in state-of-the-art cold-atom~\cite{schreiber_observation_2015} and photonic platforms~\cite{weidemann_topological_2020}, where non-Hermitian control~\cite{ashida_non-hermitian_2020,bergholtz_exceptional_2021}, tunable dissipation~\cite{daley_quantum_2014}, and real-time measurement of correlations~\cite{fitzpatrick_observation_2017} are now routinely achievable.
These platforms offer a route toward observing unconventional phases and critical behavior in non-Hermitian quantum matter.

More broadly, by bridging Mores-theoretic structures with physically realizable non-Hermitian dynamics, our work contributes to the growing dialogue between modern mathematics and non-Hermitian physics.
Given the rapid surge of interest in these fields, we hope that our results will stimulate further interdisciplinary developments.

\section*{Acknowledgments}
A.B. acknowledges a related collaboration with Adhip Agarwala, Suraj S. Hegde and Awadhesh Narayan, and thanks Sanchayan Banerjee and Anumita Bose for illuminating discussions.
All authors thank Anton Montag for fruitful discussions related to symmetries and similarities.
A.B., J.T.G., and F.K.K. acknowledge funding from the Max Planck Society Lise Meitner Excellence \mbox{Program 2.0}.
A.B., J.T.G. and F.K.K. also acknowledge support from the European Union's ERC Starting Grant ``NTopQuant'' (101116680).
The views expressed are those of the authors and do not necessarily reflect those of the European Union or the ERC.

\section*{Author Contributions}
\noindent
A.B. initiated the project.
A.B. and J.T.G. developed the formalism.
A.B. and J.T.G. wrote the manuscript with input from F.K.K..
F.K.K. supervised the project.

\section*{Data Availability}
\noindent
All study data are included in the main text and supplementary information.

\section*{Competing Interests}
\noindent
The authors declare no competing interests.

\section*{Methods}
\subsection*{Proof of Proposition 1}

\begin{proof}
The imaginary part of the dispersion, $f(k)=\Im E(k)$, over the Brillouin zone $k\in(-\pi,\pi]$ is a $2\pi$-periodic real-valued function (as we identify $f(\pi)=f(-\pi)$).
Thus, it is topologically a circle $S^1$, i.e., $f:S^1\to S^1$.
On the one hand, as such, it has non-vanishing Betti numbers $b_0=1$ and $b_1=1$, resulting in a vanishing Euler characteristic $\chi(S^1)=b_0-b_1=0$.
On the other hand, as we require $f$ to be Morse, Morse theory additionally gives $\chi(S^1)=C_0-C_1$, where $C_0$ and $C_1$ denote the numbers of minima and maxima of $f$, respectively.
Together, this yields $C_0=C_1$.
This equality is purely topological and remains stable under smooth symmetry-preserving deformations.

Time-reversal symmetry imposes $f(-k)=-f(k)$, which enforces pairing of extrema $k_\mathrm{c}\leftrightarrow -k_\mathrm{c}$ with opposite Morse index, i.e., $f''(-k_\mathrm{c})=-f''(k_\mathrm{c})$.
Thus, extrema above and below the real axis are paired with opposite Morse index.

This also holds in the more general case of having $\{E\}=\{E^*\}$ instead of time-reversal symmetry, which imposes that for each extremum $k_\mathrm{c}$, there exists another extremum $\bar{k}_\mathrm{c}$ with opposite Morse index.
However, in this more general case, \mbox{$\bar{k}_\mathrm{c} \neq -k_\mathrm{c}$}.
\end{proof}

\subsection*{Comments on the multi-band case, Corollary 2}

In the multi-band case, the spectrum in general consists of multiple closed curves, as exemplified in the SI.
Importantly, multiple energy bands can make up such a closed spectral curve.
For each such closed spectral curve, we go to an extended Brillouin zone scheme and combine the appropriate bands $E_{j_1}(k),\ldots,E_{j_\nu}(k)$ into a parametrization of the closed spectral curve $\tilde{E}_{\tilde\jmath}(\tilde{k})$, where $\nu$ is the number of combined bands and $\tilde{k} \in (-\pi\nu,\pi\nu]$.
This procedure allows us to define $\tilde{f}_{\tilde\jmath}(\tilde{k})= \Im \tilde{E}_{\tilde\jmath}(\tilde{k})$, which is a parametrization of the imaginary part of the spectral curve, essentially combining the imaginary parts of the bands $f_{j_1}(k),\ldots,f_{j_\nu}(k)$.

Each of the $\tilde{f}_{\tilde\jmath}$ define a map $\tilde{f}_{\tilde\jmath}:S^1\to S^1$.
For each such mapping, the proof of Proposition 1 holds as we require that the $f_j$, and by extension the $\tilde{f}_{\tilde\jmath}$, are Morse functions.
Hence, we can associate a dynamical topological index to each $\tilde{f}_{\tilde\jmath}(\tilde{k})$ via Eq.~\eqref{eq:nudyn}, and summing over all $\tilde\jmath$ yields
\begin{equation}
    \nudyn = - \sum_{\tilde\jmath}\sum_{\tilde{k}_{\mathrm{c},\tilde\jmath}}\sign\left[ \left. \tilde{f}_{\tilde\jmath}''(\tilde{k}) \right|_{\tilde{k} = \tilde{k}_{\mathrm{c},\tilde\jmath}} \right],
\end{equation}
where the sum over $\tilde{k}_{\mathrm{c},\tilde\jmath}$ runs over all extrema of $\tilde{f}_{\tilde\jmath}$.
Equation \eqref{eq:nudyn2} immediately follows by expressing the $\tilde{f}_{\tilde\jmath}$ in terms of $f_j$.

\subsection*{Proof of Corollary 3}

As discussed above, we can reduce the multi-band case to the single-band case.
Thus, we formulate the proof in the context of the single-band case.

\begin{proof}
The dynamical topological index counts the excess of maxima $C_1^+$ over minima $C_0^+$ in the upper half-plane of the spectrum (indicated by the superscript $+$), i.e., $\nudyn=C_1^+-C_0^+$.
As the total number of extrema is fixed by $\chi(S^1)=0$, any such imbalance must be compensated by extrema in the lower half-plane of the spectrum, which forces zeros of $f$.
As $f$ is smooth, periodic, and only has simple zeros, the intermediate value theorem predicts that for each unit of curvature in the upper half-plane there are two real-energy crossings related by symmetry.
Hence, Eq.~\eqref{eq:nF2nudyn} holds.
\end{proof}

Therefore, the existence of these dynamically generated Fermi points is topologically protected by $\chi(S^1)=0$, and the number of Fermi points can only be changed by breaking the symmetry, by passing through a non-Morse degeneracy, or by passing through a non-simple zero.

\def\bibsection{\section*{\refname}} 

\bibliographystyle{apsrev4-2}
\bibliography{references}

\begin{thebibliography}{50}%
\makeatletter
\providecommand \@ifxundefined [1]{%
 \@ifx{#1\undefined}
}%
\providecommand \@ifnum [1]{%
 \ifnum #1\expandafter \@firstoftwo
 \else \expandafter \@secondoftwo
 \fi
}%
\providecommand \@ifx [1]{%
 \ifx #1\expandafter \@firstoftwo
 \else \expandafter \@secondoftwo
 \fi
}%
\providecommand \natexlab [1]{#1}%
\providecommand \enquote  [1]{``#1''}%
\providecommand \bibnamefont  [1]{#1}%
\providecommand \bibfnamefont [1]{#1}%
\providecommand \citenamefont [1]{#1}%
\providecommand \href@noop [0]{\@secondoftwo}%
\providecommand \href [0]{\begingroup \@sanitize@url \@href}%
\providecommand \@href[1]{\@@startlink{#1}\@@href}%
\providecommand \@@href[1]{\endgroup#1\@@endlink}%
\providecommand \@sanitize@url [0]{\catcode `\\12\catcode `\$12\catcode `\&12\catcode `\#12\catcode `\^12\catcode `\_12\catcode `\%12\relax}%
\providecommand \@@startlink[1]{}%
\providecommand \@@endlink[0]{}%
\providecommand \url  [0]{\begingroup\@sanitize@url \@url }%
\providecommand \@url [1]{\endgroup\@href {#1}{\urlprefix }}%
\providecommand \urlprefix  [0]{URL }%
\providecommand \Eprint [0]{\href }%
\providecommand \doibase [0]{https://doi.org/}%
\providecommand \selectlanguage [0]{\@gobble}%
\providecommand \bibinfo  [0]{\@secondoftwo}%
\providecommand \bibfield  [0]{\@secondoftwo}%
\providecommand \translation [1]{[#1]}%
\providecommand \BibitemOpen [0]{}%
\providecommand \bibitemStop [0]{}%
\providecommand \bibitemNoStop [0]{.\EOS\space}%
\providecommand \EOS [0]{\spacefactor3000\relax}%
\providecommand \BibitemShut  [1]{\csname bibitem#1\endcsname}%
\let\auto@bib@innerbib\@empty
\bibitem [{\citenamefont {Wigner}(1960)}]{wigner_unreasonable_1960}%
  \BibitemOpen
  \bibfield  {author} {\bibinfo {author} {\bibfnamefont {E.~P.}\ \bibnamefont {Wigner}},\ }\href {https://doi.org/10.1002/cpa.3160130102} {\bibfield  {journal} {\bibinfo  {journal} {Communications on Pure and Applied Mathematics}\ }\textbf {\bibinfo {volume} {13}},\ \bibinfo {pages} {1} (\bibinfo {year} {1960})}\BibitemShut {NoStop}%
\bibitem [{\citenamefont {Haldane}(2017)}]{haldane_nobel_2017}%
  \BibitemOpen
  \bibfield  {author} {\bibinfo {author} {\bibfnamefont {F.~D.~M.}\ \bibnamefont {Haldane}},\ }\href {https://doi.org/10.1103/RevModPhys.89.040502} {\bibfield  {journal} {\bibinfo  {journal} {Reviews of Modern Physics}\ }\textbf {\bibinfo {volume} {89}},\ \bibinfo {pages} {040502} (\bibinfo {year} {2017})}\BibitemShut {NoStop}%
\bibitem [{\citenamefont {Kosterlitz}(2017)}]{kosterlitz_nobel_2017}%
  \BibitemOpen
  \bibfield  {author} {\bibinfo {author} {\bibfnamefont {J.~M.}\ \bibnamefont {Kosterlitz}},\ }\href {https://doi.org/10.1103/RevModPhys.89.040501} {\bibfield  {journal} {\bibinfo  {journal} {Reviews of Modern Physics}\ }\textbf {\bibinfo {volume} {89}},\ \bibinfo {pages} {040501} (\bibinfo {year} {2017})}\BibitemShut {NoStop}%
\bibitem [{\citenamefont {Witten}(1989)}]{witten_quantum_1989}%
  \BibitemOpen
  \bibfield  {author} {\bibinfo {author} {\bibfnamefont {E.}~\bibnamefont {Witten}},\ }\href {https://doi.org/10.1007/BF01217730} {\bibfield  {journal} {\bibinfo  {journal} {Communications in Mathematical Physics}\ }\textbf {\bibinfo {volume} {121}},\ \bibinfo {pages} {351} (\bibinfo {year} {1989})}\BibitemShut {NoStop}%
\bibitem [{\citenamefont {Essam}(1971)}]{essam_graph_1971}%
  \BibitemOpen
  \bibfield  {author} {\bibinfo {author} {\bibfnamefont {J.~W.}\ \bibnamefont {Essam}},\ }\href {https://doi.org/10.1016/0012-365X(71)90009-4} {\bibfield  {journal} {\bibinfo  {journal} {Discrete Mathematics}\ }\textbf {\bibinfo {volume} {1}},\ \bibinfo {pages} {83} (\bibinfo {year} {1971})}\BibitemShut {NoStop}%
\bibitem [{\citenamefont {Banerjee}\ \emph {et~al.}(2023{\natexlab{a}})\citenamefont {Banerjee}, \citenamefont {Jaiswal}, \citenamefont {Manjunath},\ and\ \citenamefont {Narayan}}]{banerjee_tropical_2023}%
  \BibitemOpen
  \bibfield  {author} {\bibinfo {author} {\bibfnamefont {A.}~\bibnamefont {Banerjee}}, \bibinfo {author} {\bibfnamefont {R.}~\bibnamefont {Jaiswal}}, \bibinfo {author} {\bibfnamefont {M.}~\bibnamefont {Manjunath}},\ and\ \bibinfo {author} {\bibfnamefont {A.}~\bibnamefont {Narayan}},\ }\href {https://doi.org/10.1073/pnas.2302572120} {\bibfield  {journal} {\bibinfo  {journal} {Proceedings of the National Academy of Sciences}\ }\textbf {\bibinfo {volume} {120}},\ \bibinfo {pages} {e2302572120} (\bibinfo {year} {2023}{\natexlab{a}})}\BibitemShut {NoStop}%
\bibitem [{\citenamefont {Wen}(2017)}]{wen_colloquium_2017}%
  \BibitemOpen
  \bibfield  {author} {\bibinfo {author} {\bibfnamefont {X.-G.}\ \bibnamefont {Wen}},\ }\href {https://doi.org/10.1103/RevModPhys.89.041004} {\bibfield  {journal} {\bibinfo  {journal} {Reviews of Modern Physics}\ }\textbf {\bibinfo {volume} {89}},\ \bibinfo {pages} {041004} (\bibinfo {year} {2017})}\BibitemShut {NoStop}%
\bibitem [{\citenamefont {Sieberer}\ \emph {et~al.}(2025)\citenamefont {Sieberer}, \citenamefont {Buchhold}, \citenamefont {Marino},\ and\ \citenamefont {Diehl}}]{sieberer_universality_2025}%
  \BibitemOpen
  \bibfield  {author} {\bibinfo {author} {\bibfnamefont {L.~M.}\ \bibnamefont {Sieberer}}, \bibinfo {author} {\bibfnamefont {M.}~\bibnamefont {Buchhold}}, \bibinfo {author} {\bibfnamefont {J.}~\bibnamefont {Marino}},\ and\ \bibinfo {author} {\bibfnamefont {S.}~\bibnamefont {Diehl}},\ }\href {https://doi.org/10.1103/RevModPhys.97.025004} {\bibfield  {journal} {\bibinfo  {journal} {Reviews of Modern Physics}\ }\textbf {\bibinfo {volume} {97}},\ \bibinfo {pages} {025004} (\bibinfo {year} {2025})}\BibitemShut {NoStop}%
\bibitem [{\citenamefont {Weidemann}\ \emph {et~al.}(2020)\citenamefont {Weidemann}, \citenamefont {Kremer}, \citenamefont {Helbig}, \citenamefont {Hofmann}, \citenamefont {Stegmaier}, \citenamefont {Greiter}, \citenamefont {Thomale},\ and\ \citenamefont {Szameit}}]{weidemann_topological_2020}%
  \BibitemOpen
  \bibfield  {author} {\bibinfo {author} {\bibfnamefont {S.}~\bibnamefont {Weidemann}}, \bibinfo {author} {\bibfnamefont {M.}~\bibnamefont {Kremer}}, \bibinfo {author} {\bibfnamefont {T.}~\bibnamefont {Helbig}}, \bibinfo {author} {\bibfnamefont {T.}~\bibnamefont {Hofmann}}, \bibinfo {author} {\bibfnamefont {A.}~\bibnamefont {Stegmaier}}, \bibinfo {author} {\bibfnamefont {M.}~\bibnamefont {Greiter}}, \bibinfo {author} {\bibfnamefont {R.}~\bibnamefont {Thomale}},\ and\ \bibinfo {author} {\bibfnamefont {A.}~\bibnamefont {Szameit}},\ }\href {https://doi.org/10.1126/science.aaz8727} {\bibfield  {journal} {\bibinfo  {journal} {Science}\ }\textbf {\bibinfo {volume} {368}},\ \bibinfo {pages} {311} (\bibinfo {year} {2020})}\BibitemShut {NoStop}%
\bibitem [{\citenamefont {Schreiber}\ \emph {et~al.}(2015)\citenamefont {Schreiber}, \citenamefont {Hodgman}, \citenamefont {Bordia}, \citenamefont {L{\"u}schen}, \citenamefont {Fischer}, \citenamefont {Vosk}, \citenamefont {Altman}, \citenamefont {Schneider},\ and\ \citenamefont {Bloch}}]{schreiber_observation_2015}%
  \BibitemOpen
  \bibfield  {author} {\bibinfo {author} {\bibfnamefont {M.}~\bibnamefont {Schreiber}}, \bibinfo {author} {\bibfnamefont {S.~S.}\ \bibnamefont {Hodgman}}, \bibinfo {author} {\bibfnamefont {P.}~\bibnamefont {Bordia}}, \bibinfo {author} {\bibfnamefont {H.~P.}\ \bibnamefont {L{\"u}schen}}, \bibinfo {author} {\bibfnamefont {M.~H.}\ \bibnamefont {Fischer}}, \bibinfo {author} {\bibfnamefont {R.}~\bibnamefont {Vosk}}, \bibinfo {author} {\bibfnamefont {E.}~\bibnamefont {Altman}}, \bibinfo {author} {\bibfnamefont {U.}~\bibnamefont {Schneider}},\ and\ \bibinfo {author} {\bibfnamefont {I.}~\bibnamefont {Bloch}},\ }\href {https://doi.org/10.1126/science.aaa7432} {\bibfield  {journal} {\bibinfo  {journal} {Science}\ }\textbf {\bibinfo {volume} {349}},\ \bibinfo {pages} {842} (\bibinfo {year} {2015})}\BibitemShut {NoStop}%
\bibitem [{\citenamefont {Daley}(2014)}]{daley_quantum_2014}%
  \BibitemOpen
  \bibfield  {author} {\bibinfo {author} {\bibfnamefont {A.~J.}\ \bibnamefont {Daley}},\ }\href {https://doi.org/10.1080/00018732.2014.933502} {\bibfield  {journal} {\bibinfo  {journal} {Advances in Physics}\ }\textbf {\bibinfo {volume} {63}},\ \bibinfo {pages} {77} (\bibinfo {year} {2014})}\BibitemShut {NoStop}%
\bibitem [{\citenamefont {Bender}\ and\ \citenamefont {Boettcher}(1998)}]{bender_real_1998}%
  \BibitemOpen
  \bibfield  {author} {\bibinfo {author} {\bibfnamefont {C.~M.}\ \bibnamefont {Bender}}\ and\ \bibinfo {author} {\bibfnamefont {S.}~\bibnamefont {Boettcher}},\ }\href {https://doi.org/10.1103/PhysRevLett.80.5243} {\bibfield  {journal} {\bibinfo  {journal} {Physical Review Letters}\ }\textbf {\bibinfo {volume} {80}},\ \bibinfo {pages} {5243} (\bibinfo {year} {1998})}\BibitemShut {NoStop}%
\bibitem [{\citenamefont {Ashida}\ \emph {et~al.}(2020)\citenamefont {Ashida}, \citenamefont {Gong},\ and\ \citenamefont {Ueda}}]{ashida_non-hermitian_2020}%
  \BibitemOpen
  \bibfield  {author} {\bibinfo {author} {\bibfnamefont {Y.}~\bibnamefont {Ashida}}, \bibinfo {author} {\bibfnamefont {Z.}~\bibnamefont {Gong}},\ and\ \bibinfo {author} {\bibfnamefont {M.}~\bibnamefont {Ueda}},\ }\href {https://doi.org/10.1080/00018732.2021.1876991} {\bibfield  {journal} {\bibinfo  {journal} {Advances in Physics}\ }\textbf {\bibinfo {volume} {69}},\ \bibinfo {pages} {249} (\bibinfo {year} {2020})}\BibitemShut {NoStop}%
\bibitem [{\citenamefont {Bergholtz}\ \emph {et~al.}(2021)\citenamefont {Bergholtz}, \citenamefont {Budich},\ and\ \citenamefont {Kunst}}]{bergholtz_exceptional_2021}%
  \BibitemOpen
  \bibfield  {author} {\bibinfo {author} {\bibfnamefont {E.~J.}\ \bibnamefont {Bergholtz}}, \bibinfo {author} {\bibfnamefont {J.~C.}\ \bibnamefont {Budich}},\ and\ \bibinfo {author} {\bibfnamefont {F.~K.}\ \bibnamefont {Kunst}},\ }\href {https://doi.org/10.1103/RevModPhys.93.015005} {\bibfield  {journal} {\bibinfo  {journal} {Reviews of Modern Physics}\ }\textbf {\bibinfo {volume} {93}},\ \bibinfo {pages} {015005} (\bibinfo {year} {2021})}\BibitemShut {NoStop}%
\bibitem [{\citenamefont {Banerjee}\ \emph {et~al.}(2023{\natexlab{b}})\citenamefont {Banerjee}, \citenamefont {Sarkar}, \citenamefont {Dey},\ and\ \citenamefont {Narayan}}]{banerjee_non-hermitian_2023}%
  \BibitemOpen
  \bibfield  {author} {\bibinfo {author} {\bibfnamefont {A.}~\bibnamefont {Banerjee}}, \bibinfo {author} {\bibfnamefont {R.}~\bibnamefont {Sarkar}}, \bibinfo {author} {\bibfnamefont {S.}~\bibnamefont {Dey}},\ and\ \bibinfo {author} {\bibfnamefont {A.}~\bibnamefont {Narayan}},\ }\href {https://doi.org/10.1088/1361-648X/acd1cb} {\bibfield  {journal} {\bibinfo  {journal} {Journal of Physics: Condensed Matter}\ }\textbf {\bibinfo {volume} {35}},\ \bibinfo {pages} {333001} (\bibinfo {year} {2023}{\natexlab{b}})}\BibitemShut {NoStop}%
\bibitem [{\citenamefont {Lee}\ \emph {et~al.}(2019)\citenamefont {Lee}, \citenamefont {Ahn}, \citenamefont {Zhou},\ and\ \citenamefont {Vishwanath}}]{lee_topological_2019}%
  \BibitemOpen
  \bibfield  {author} {\bibinfo {author} {\bibfnamefont {J.~Y.}\ \bibnamefont {Lee}}, \bibinfo {author} {\bibfnamefont {J.}~\bibnamefont {Ahn}}, \bibinfo {author} {\bibfnamefont {H.}~\bibnamefont {Zhou}},\ and\ \bibinfo {author} {\bibfnamefont {A.}~\bibnamefont {Vishwanath}},\ }\href {https://doi.org/10.1103/PhysRevLett.123.206404} {\bibfield  {journal} {\bibinfo  {journal} {Physical Review Letters}\ }\textbf {\bibinfo {volume} {123}},\ \bibinfo {pages} {206404} (\bibinfo {year} {2019})}\BibitemShut {NoStop}%
\bibitem [{\citenamefont {Kato}(1966)}]{kato_perturbation_1966}%
  \BibitemOpen
  \bibfield  {author} {\bibinfo {author} {\bibfnamefont {T.}~\bibnamefont {Kato}},\ }\href@noop {} {\emph {\bibinfo {title} {Perturbation {{Theory}} for {{Linear Operators}}}}}\ (\bibinfo  {publisher} {Springer},\ \bibinfo {address} {New York},\ \bibinfo {year} {1966})\BibitemShut {NoStop}%
\bibitem [{\citenamefont {Yao}\ and\ \citenamefont {Wang}(2018)}]{yao_edge_2018}%
  \BibitemOpen
  \bibfield  {author} {\bibinfo {author} {\bibfnamefont {S.}~\bibnamefont {Yao}}\ and\ \bibinfo {author} {\bibfnamefont {Z.}~\bibnamefont {Wang}},\ }\href {https://doi.org/10.1103/PhysRevLett.121.086803} {\bibfield  {journal} {\bibinfo  {journal} {Physical Review Letters}\ }\textbf {\bibinfo {volume} {121}},\ \bibinfo {pages} {086803} (\bibinfo {year} {2018})}\BibitemShut {NoStop}%
\bibitem [{\citenamefont {Gohsrich}\ \emph {et~al.}(2025)\citenamefont {Gohsrich}, \citenamefont {Banerjee},\ and\ \citenamefont {Kunst}}]{gohsrich_non-hermitian_2025-1}%
  \BibitemOpen
  \bibfield  {author} {\bibinfo {author} {\bibfnamefont {J.~T.}\ \bibnamefont {Gohsrich}}, \bibinfo {author} {\bibfnamefont {A.}~\bibnamefont {Banerjee}},\ and\ \bibinfo {author} {\bibfnamefont {F.~K.}\ \bibnamefont {Kunst}},\ }\href {https://doi.org/10.1209/0295-5075/addf77} {\bibfield  {journal} {\bibinfo  {journal} {Europhysics Letters}\ }\textbf {\bibinfo {volume} {150}},\ \bibinfo {pages} {60001} (\bibinfo {year} {2025})}\BibitemShut {NoStop}%
\bibitem [{\citenamefont {Kunst}\ \emph {et~al.}(2018)\citenamefont {Kunst}, \citenamefont {Edvardsson}, \citenamefont {Budich},\ and\ \citenamefont {Bergholtz}}]{kunst_biorthogonal_2018}%
  \BibitemOpen
  \bibfield  {author} {\bibinfo {author} {\bibfnamefont {F.~K.}\ \bibnamefont {Kunst}}, \bibinfo {author} {\bibfnamefont {E.}~\bibnamefont {Edvardsson}}, \bibinfo {author} {\bibfnamefont {J.~C.}\ \bibnamefont {Budich}},\ and\ \bibinfo {author} {\bibfnamefont {E.~J.}\ \bibnamefont {Bergholtz}},\ }\href {https://doi.org/10.1103/PhysRevLett.121.026808} {\bibfield  {journal} {\bibinfo  {journal} {Physical Review Letters}\ }\textbf {\bibinfo {volume} {121}},\ \bibinfo {pages} {026808} (\bibinfo {year} {2018})}\BibitemShut {NoStop}%
\bibitem [{\citenamefont {Zhang}\ \emph {et~al.}(2020)\citenamefont {Zhang}, \citenamefont {Yang},\ and\ \citenamefont {Fang}}]{zhang_correspondence_2020}%
  \BibitemOpen
  \bibfield  {author} {\bibinfo {author} {\bibfnamefont {K.}~\bibnamefont {Zhang}}, \bibinfo {author} {\bibfnamefont {Z.}~\bibnamefont {Yang}},\ and\ \bibinfo {author} {\bibfnamefont {C.}~\bibnamefont {Fang}},\ }\href {https://doi.org/10.1103/PhysRevLett.125.126402} {\bibfield  {journal} {\bibinfo  {journal} {Physical Review Letters}\ }\textbf {\bibinfo {volume} {125}},\ \bibinfo {pages} {126402} (\bibinfo {year} {2020})}\BibitemShut {NoStop}%
\bibitem [{\citenamefont {Panda}\ and\ \citenamefont {Banerjee}(2020)}]{panda_entanglement_2020}%
  \BibitemOpen
  \bibfield  {author} {\bibinfo {author} {\bibfnamefont {A.}~\bibnamefont {Panda}}\ and\ \bibinfo {author} {\bibfnamefont {S.}~\bibnamefont {Banerjee}},\ }\href {https://doi.org/10.1103/PhysRevB.101.184201} {\bibfield  {journal} {\bibinfo  {journal} {Physical Review B}\ }\textbf {\bibinfo {volume} {101}},\ \bibinfo {pages} {184201} (\bibinfo {year} {2020})}\BibitemShut {NoStop}%
\bibitem [{\citenamefont {Banerjee}\ \emph {et~al.}(2022)\citenamefont {Banerjee}, \citenamefont {Hegde}, \citenamefont {Agarwala},\ and\ \citenamefont {Narayan}}]{banerjee_chiral_2022}%
  \BibitemOpen
  \bibfield  {author} {\bibinfo {author} {\bibfnamefont {A.}~\bibnamefont {Banerjee}}, \bibinfo {author} {\bibfnamefont {S.~S.}\ \bibnamefont {Hegde}}, \bibinfo {author} {\bibfnamefont {A.}~\bibnamefont {Agarwala}},\ and\ \bibinfo {author} {\bibfnamefont {A.}~\bibnamefont {Narayan}},\ }\href {https://doi.org/10.1103/PhysRevB.105.205403} {\bibfield  {journal} {\bibinfo  {journal} {Physical Review B}\ }\textbf {\bibinfo {volume} {105}},\ \bibinfo {pages} {205403} (\bibinfo {year} {2022})}\BibitemShut {NoStop}%
\bibitem [{\citenamefont {Chang}\ \emph {et~al.}(2020)\citenamefont {Chang}, \citenamefont {You}, \citenamefont {Wen},\ and\ \citenamefont {Ryu}}]{chang_entanglement_2020}%
  \BibitemOpen
  \bibfield  {author} {\bibinfo {author} {\bibfnamefont {P.-Y.}\ \bibnamefont {Chang}}, \bibinfo {author} {\bibfnamefont {J.-S.}\ \bibnamefont {You}}, \bibinfo {author} {\bibfnamefont {X.}~\bibnamefont {Wen}},\ and\ \bibinfo {author} {\bibfnamefont {S.}~\bibnamefont {Ryu}},\ }\href {https://doi.org/10.1103/PhysRevResearch.2.033069} {\bibfield  {journal} {\bibinfo  {journal} {Physical Review Research}\ }\textbf {\bibinfo {volume} {2}},\ \bibinfo {pages} {033069} (\bibinfo {year} {2020})}\BibitemShut {NoStop}%
\bibitem [{\citenamefont {Kitaev}(2009)}]{kitaev_periodic_2009}%
  \BibitemOpen
  \bibfield  {author} {\bibinfo {author} {\bibfnamefont {A.}~\bibnamefont {Kitaev}},\ }\href {https://doi.org/10.1063/1.3149495} {\bibfield  {journal} {\bibinfo  {journal} {AIP Conference Proceedings}\ }\textbf {\bibinfo {volume} {1134}},\ \bibinfo {pages} {22} (\bibinfo {year} {2009})}\BibitemShut {NoStop}%
\bibitem [{\citenamefont {Ryu}\ \emph {et~al.}(2010)\citenamefont {Ryu}, \citenamefont {Schnyder}, \citenamefont {Furusaki},\ and\ \citenamefont {Ludwig}}]{ryu_topological_2010}%
  \BibitemOpen
  \bibfield  {author} {\bibinfo {author} {\bibfnamefont {S.}~\bibnamefont {Ryu}}, \bibinfo {author} {\bibfnamefont {A.~P.}\ \bibnamefont {Schnyder}}, \bibinfo {author} {\bibfnamefont {A.}~\bibnamefont {Furusaki}},\ and\ \bibinfo {author} {\bibfnamefont {A.~W.~W.}\ \bibnamefont {Ludwig}},\ }\href {https://doi.org/10.1088/1367-2630/12/6/065010} {\bibfield  {journal} {\bibinfo  {journal} {New Journal of Physics}\ }\textbf {\bibinfo {volume} {12}},\ \bibinfo {pages} {065010} (\bibinfo {year} {2010})}\BibitemShut {NoStop}%
\bibitem [{\citenamefont {Shen}\ \emph {et~al.}(2018)\citenamefont {Shen}, \citenamefont {Zhen},\ and\ \citenamefont {Fu}}]{shen_topological_2018}%
  \BibitemOpen
  \bibfield  {author} {\bibinfo {author} {\bibfnamefont {H.}~\bibnamefont {Shen}}, \bibinfo {author} {\bibfnamefont {B.}~\bibnamefont {Zhen}},\ and\ \bibinfo {author} {\bibfnamefont {L.}~\bibnamefont {Fu}},\ }\href {https://doi.org/10.1103/PhysRevLett.120.146402} {\bibfield  {journal} {\bibinfo  {journal} {Physical Review Letters}\ }\textbf {\bibinfo {volume} {120}},\ \bibinfo {pages} {146402} (\bibinfo {year} {2018})}\BibitemShut {NoStop}%
\bibitem [{\citenamefont {Kawabata}\ \emph {et~al.}(2019)\citenamefont {Kawabata}, \citenamefont {Shiozaki}, \citenamefont {Ueda},\ and\ \citenamefont {Sato}}]{kawabata_symmetry_2019}%
  \BibitemOpen
  \bibfield  {author} {\bibinfo {author} {\bibfnamefont {K.}~\bibnamefont {Kawabata}}, \bibinfo {author} {\bibfnamefont {K.}~\bibnamefont {Shiozaki}}, \bibinfo {author} {\bibfnamefont {M.}~\bibnamefont {Ueda}},\ and\ \bibinfo {author} {\bibfnamefont {M.}~\bibnamefont {Sato}},\ }\href {https://doi.org/10.1103/PhysRevX.9.041015} {\bibfield  {journal} {\bibinfo  {journal} {Physical Review X}\ }\textbf {\bibinfo {volume} {9}},\ \bibinfo {pages} {041015} (\bibinfo {year} {2019})}\BibitemShut {NoStop}%
\bibitem [{\citenamefont {Wang}\ \emph {et~al.}(2021)\citenamefont {Wang}, \citenamefont {Dutt}, \citenamefont {Wojcik},\ and\ \citenamefont {Fan}}]{wang_topological_2021-1}%
  \BibitemOpen
  \bibfield  {author} {\bibinfo {author} {\bibfnamefont {K.}~\bibnamefont {Wang}}, \bibinfo {author} {\bibfnamefont {A.}~\bibnamefont {Dutt}}, \bibinfo {author} {\bibfnamefont {C.~C.}\ \bibnamefont {Wojcik}},\ and\ \bibinfo {author} {\bibfnamefont {S.}~\bibnamefont {Fan}},\ }\href {https://doi.org/10.1038/s41586-021-03848-x} {\bibfield  {journal} {\bibinfo  {journal} {Nature}\ }\textbf {\bibinfo {volume} {598}},\ \bibinfo {pages} {59} (\bibinfo {year} {2021})}\BibitemShut {NoStop}%
\bibitem [{\citenamefont {K{\"o}nig}\ \emph {et~al.}(2023)\citenamefont {K{\"o}nig}, \citenamefont {Yang}, \citenamefont {Budich},\ and\ \citenamefont {Bergholtz}}]{konig_braid-protected_2023}%
  \BibitemOpen
  \bibfield  {author} {\bibinfo {author} {\bibfnamefont {J.~L.~K.}\ \bibnamefont {K{\"o}nig}}, \bibinfo {author} {\bibfnamefont {K.}~\bibnamefont {Yang}}, \bibinfo {author} {\bibfnamefont {J.~C.}\ \bibnamefont {Budich}},\ and\ \bibinfo {author} {\bibfnamefont {E.~J.}\ \bibnamefont {Bergholtz}},\ }\href {https://doi.org/10.1103/PhysRevResearch.5.L042010} {\bibfield  {journal} {\bibinfo  {journal} {Physical Review Research}\ }\textbf {\bibinfo {volume} {5}},\ \bibinfo {pages} {L042010} (\bibinfo {year} {2023})}\BibitemShut {NoStop}%
\bibitem [{\citenamefont {Yoshida}\ \emph {et~al.}(2025)\citenamefont {Yoshida}, \citenamefont {K{\"o}nig}, \citenamefont {R{\o}dland}, \citenamefont {Bergholtz},\ and\ \citenamefont {St{\aa}lhammar}}]{yoshida_winding_2025}%
  \BibitemOpen
  \bibfield  {author} {\bibinfo {author} {\bibfnamefont {T.}~\bibnamefont {Yoshida}}, \bibinfo {author} {\bibfnamefont {J.~L.~K.}\ \bibnamefont {K{\"o}nig}}, \bibinfo {author} {\bibfnamefont {L.}~\bibnamefont {R{\o}dland}}, \bibinfo {author} {\bibfnamefont {E.~J.}\ \bibnamefont {Bergholtz}},\ and\ \bibinfo {author} {\bibfnamefont {M.}~\bibnamefont {St{\aa}lhammar}},\ }\href {https://doi.org/10.1103/PhysRevResearch.7.L012021} {\bibfield  {journal} {\bibinfo  {journal} {Physical Review Research}\ }\textbf {\bibinfo {volume} {7}},\ \bibinfo {pages} {L012021} (\bibinfo {year} {2025})}\BibitemShut {NoStop}%
\bibitem [{\citenamefont {St{\aa}lhammar}\ and\ \citenamefont {R{\o}dland}(2025)}]{stalhammar_abelian_2025}%
  \BibitemOpen
  \bibfield  {author} {\bibinfo {author} {\bibfnamefont {M.}~\bibnamefont {St{\aa}lhammar}}\ and\ \bibinfo {author} {\bibfnamefont {L.}~\bibnamefont {R{\o}dland}},\ }\href {https://doi.org/10.1103/8lyx-vtnw} {\bibfield  {journal} {\bibinfo  {journal} {Physical Review Research}\ }\textbf {\bibinfo {volume} {7}},\ \bibinfo {pages} {033246} (\bibinfo {year} {2025})}\BibitemShut {NoStop}%
\bibitem [{\citenamefont {Montag}\ \emph {et~al.}(2026)\citenamefont {Montag}, \citenamefont {Felski},\ and\ \citenamefont {Kunst}}]{montag_spectral_2026}%
  \BibitemOpen
  \bibfield  {author} {\bibinfo {author} {\bibfnamefont {A.}~\bibnamefont {Montag}}, \bibinfo {author} {\bibfnamefont {A.}~\bibnamefont {Felski}},\ and\ \bibinfo {author} {\bibfnamefont {F.~K.}\ \bibnamefont {Kunst}},\ }\href {https://doi.org/10.21468/SciPostPhys.20.5.133} {\bibfield  {journal} {\bibinfo  {journal} {SciPost Physics}\ }\textbf {\bibinfo {volume} {20}},\ \bibinfo {pages} {133} (\bibinfo {year} {2026})}\BibitemShut {NoStop}%
\bibitem [{\citenamefont {Lein}(2020)}]{lein_choosing_2020}%
  \BibitemOpen
  \bibfield  {author} {\bibinfo {author} {\bibfnamefont {M.}~\bibnamefont {Lein}},\ }\href@noop {} {} (\bibinfo {year} {2020}),\ \Eprint {https://arxiv.org/abs/2010.09261} {arXiv:2010.09261} \BibitemShut {NoStop}%
\bibitem [{\citenamefont {Morse}(1934)}]{morse_calculus_1934}%
  \BibitemOpen
  \bibfield  {author} {\bibinfo {author} {\bibfnamefont {M.}~\bibnamefont {Morse}},\ }\href {https://doi.org/10.1090/coll/018} {\emph {\bibinfo {title} {The {{Calculus}} of {{Variations}} in the {{Large}}}}},\ \bibinfo {series} {Colloquium {{Publications}}}, Vol.~\bibinfo {volume} {18}\ (\bibinfo  {publisher} {American Mathematical Society},\ \bibinfo {year} {1934})\BibitemShut {NoStop}%
\bibitem [{\citenamefont {Milnor}(1963)}]{milnor_morse_1963}%
  \BibitemOpen
  \bibfield  {author} {\bibinfo {author} {\bibfnamefont {J.~W.}\ \bibnamefont {Milnor}},\ }\href@noop {} {\emph {\bibinfo {title} {Morse Theory}}},\ \bibinfo {series} {Annals of {{Mathematics Studies}}}, Vol.~\bibinfo {volume} {51}\ (\bibinfo  {publisher} {Princeton University Press},\ \bibinfo {year} {1963})\BibitemShut {NoStop}%
\bibitem [{\citenamefont {Montag}\ and\ \citenamefont {Kunst}(2024)}]{montag_essential_2024-1}%
  \BibitemOpen
  \bibfield  {author} {\bibinfo {author} {\bibfnamefont {A.}~\bibnamefont {Montag}}\ and\ \bibinfo {author} {\bibfnamefont {F.~K.}\ \bibnamefont {Kunst}},\ }\href {https://doi.org/10.1063/5.0206211} {\bibfield  {journal} {\bibinfo  {journal} {Journal of Mathematical Physics}\ }\textbf {\bibinfo {volume} {65}},\ \bibinfo {pages} {122101} (\bibinfo {year} {2024})}\BibitemShut {NoStop}%
\bibitem [{\citenamefont {Nielsen}\ and\ \citenamefont {Ninomiya}(1981)}]{nielsen_no-go_1981}%
  \BibitemOpen
  \bibfield  {author} {\bibinfo {author} {\bibfnamefont {H.~B.}\ \bibnamefont {Nielsen}}\ and\ \bibinfo {author} {\bibfnamefont {M.}~\bibnamefont {Ninomiya}},\ }\href {https://doi.org/10.1016/0370-2693(81)91026-1} {\bibfield  {journal} {\bibinfo  {journal} {Physics Letters B}\ }\textbf {\bibinfo {volume} {105}},\ \bibinfo {pages} {219} (\bibinfo {year} {1981})}\BibitemShut {NoStop}%
\bibitem [{\citenamefont {Ying}\ and\ \citenamefont {Kamenev}(2018)}]{ying_symmetry-protected_2018}%
  \BibitemOpen
  \bibfield  {author} {\bibinfo {author} {\bibfnamefont {X.}~\bibnamefont {Ying}}\ and\ \bibinfo {author} {\bibfnamefont {A.}~\bibnamefont {Kamenev}},\ }\href {https://doi.org/10.1103/PhysRevLett.121.086810} {\bibfield  {journal} {\bibinfo  {journal} {Physical Review Letters}\ }\textbf {\bibinfo {volume} {121}},\ \bibinfo {pages} {086810} (\bibinfo {year} {2018})}\BibitemShut {NoStop}%
\bibitem [{\citenamefont {Bessho}\ and\ \citenamefont {Sato}(2021)}]{bessho_nielsen-ninomiya_2021}%
  \BibitemOpen
  \bibfield  {author} {\bibinfo {author} {\bibfnamefont {T.}~\bibnamefont {Bessho}}\ and\ \bibinfo {author} {\bibfnamefont {M.}~\bibnamefont {Sato}},\ }\href {https://doi.org/10.1103/PhysRevLett.127.196404} {\bibfield  {journal} {\bibinfo  {journal} {Physical Review Letters}\ }\textbf {\bibinfo {volume} {127}},\ \bibinfo {pages} {196404} (\bibinfo {year} {2021})}\BibitemShut {NoStop}%
\bibitem [{\citenamefont {Peschel}(2003)}]{peschel_calculation_2003}%
  \BibitemOpen
  \bibfield  {author} {\bibinfo {author} {\bibfnamefont {I.}~\bibnamefont {Peschel}},\ }\href {https://doi.org/10.1088/0305-4470/36/14/101} {\bibfield  {journal} {\bibinfo  {journal} {Journal of Physics A: Mathematical and General}\ }\textbf {\bibinfo {volume} {36}},\ \bibinfo {pages} {L205} (\bibinfo {year} {2003})}\BibitemShut {NoStop}%
\bibitem [{\citenamefont {Calabrese}\ and\ \citenamefont {Cardy}(2009)}]{calabrese_entanglement_2009}%
  \BibitemOpen
  \bibfield  {author} {\bibinfo {author} {\bibfnamefont {P.}~\bibnamefont {Calabrese}}\ and\ \bibinfo {author} {\bibfnamefont {J.}~\bibnamefont {Cardy}},\ }\href {https://doi.org/10.1088/1751-8113/42/50/504005} {\bibfield  {journal} {\bibinfo  {journal} {Journal of Physics A: Mathematical and Theoretical}\ }\textbf {\bibinfo {volume} {42}},\ \bibinfo {pages} {504005} (\bibinfo {year} {2009})}\BibitemShut {NoStop}%
\bibitem [{\citenamefont {Calabrese}\ and\ \citenamefont {Cardy}(2004)}]{calabrese_entanglement_2004}%
  \BibitemOpen
  \bibfield  {author} {\bibinfo {author} {\bibfnamefont {P.}~\bibnamefont {Calabrese}}\ and\ \bibinfo {author} {\bibfnamefont {J.}~\bibnamefont {Cardy}},\ }\href {https://doi.org/10.1088/1742-5468/2004/06/P06002} {\bibfield  {journal} {\bibinfo  {journal} {Journal of Statistical Mechanics: Theory and Experiment}\ }\textbf {\bibinfo {volume} {2004}},\ \bibinfo {pages} {P06002} (\bibinfo {year} {2004})}\BibitemShut {NoStop}%
\bibitem [{\citenamefont {Hatano}\ and\ \citenamefont {Nelson}(1996)}]{hatano_localization_1996}%
  \BibitemOpen
  \bibfield  {author} {\bibinfo {author} {\bibfnamefont {N.}~\bibnamefont {Hatano}}\ and\ \bibinfo {author} {\bibfnamefont {D.~R.}\ \bibnamefont {Nelson}},\ }\href {https://doi.org/10.1103/PhysRevLett.77.570} {\bibfield  {journal} {\bibinfo  {journal} {Physical Review Letters}\ }\textbf {\bibinfo {volume} {77}},\ \bibinfo {pages} {570} (\bibinfo {year} {1996})}\BibitemShut {NoStop}%
\bibitem [{\citenamefont {Gohsrich}\ \emph {et~al.}(2024)\citenamefont {Gohsrich}, \citenamefont {Fauman},\ and\ \citenamefont {Kunst}}]{gohsrich_exceptional_2024}%
  \BibitemOpen
  \bibfield  {author} {\bibinfo {author} {\bibfnamefont {J.~T.}\ \bibnamefont {Gohsrich}}, \bibinfo {author} {\bibfnamefont {J.}~\bibnamefont {Fauman}},\ and\ \bibinfo {author} {\bibfnamefont {F.~K.}\ \bibnamefont {Kunst}},\ }\href@noop {} {} (\bibinfo {year} {2024}),\ \Eprint {https://arxiv.org/abs/2403.12018} {arXiv:2403.12018} \BibitemShut {NoStop}%
\bibitem [{\citenamefont {Marques}\ and\ \citenamefont {Dias}(2022)}]{marques_generalized_2022}%
  \BibitemOpen
  \bibfield  {author} {\bibinfo {author} {\bibfnamefont {A.~M.}\ \bibnamefont {Marques}}\ and\ \bibinfo {author} {\bibfnamefont {R.~G.}\ \bibnamefont {Dias}},\ }\href {https://doi.org/10.1103/PhysRevB.106.205146} {\bibfield  {journal} {\bibinfo  {journal} {Physical Review B}\ }\textbf {\bibinfo {volume} {106}},\ \bibinfo {pages} {205146} (\bibinfo {year} {2022})}\BibitemShut {NoStop}%
\bibitem [{\citenamefont {Zhang}\ \emph {et~al.}(2022)\citenamefont {Zhang}, \citenamefont {Denner}, \citenamefont {Bzdu{\v s}ek}, \citenamefont {Sentef},\ and\ \citenamefont {Neupert}}]{zhang_symmetry_2022-1}%
  \BibitemOpen
  \bibfield  {author} {\bibinfo {author} {\bibfnamefont {S.-B.}\ \bibnamefont {Zhang}}, \bibinfo {author} {\bibfnamefont {M.~M.}\ \bibnamefont {Denner}}, \bibinfo {author} {\bibfnamefont {T.}~\bibnamefont {Bzdu{\v s}ek}}, \bibinfo {author} {\bibfnamefont {M.~A.}\ \bibnamefont {Sentef}},\ and\ \bibinfo {author} {\bibfnamefont {T.}~\bibnamefont {Neupert}},\ }\href {https://doi.org/10.1103/PhysRevB.106.L121102} {\bibfield  {journal} {\bibinfo  {journal} {Physical Review B}\ }\textbf {\bibinfo {volume} {106}},\ \bibinfo {pages} {L121102} (\bibinfo {year} {2022})}\BibitemShut {NoStop}%
\bibitem [{\citenamefont {Kawabata}\ \emph {et~al.}(2022)\citenamefont {Kawabata}, \citenamefont {Shiozaki},\ and\ \citenamefont {Ryu}}]{kawabata_many-body_2022-1}%
  \BibitemOpen
  \bibfield  {author} {\bibinfo {author} {\bibfnamefont {K.}~\bibnamefont {Kawabata}}, \bibinfo {author} {\bibfnamefont {K.}~\bibnamefont {Shiozaki}},\ and\ \bibinfo {author} {\bibfnamefont {S.}~\bibnamefont {Ryu}},\ }\href {https://doi.org/10.1103/PhysRevB.105.165137} {\bibfield  {journal} {\bibinfo  {journal} {Physical Review B}\ }\textbf {\bibinfo {volume} {105}},\ \bibinfo {pages} {165137} (\bibinfo {year} {2022})}\BibitemShut {NoStop}%
\bibitem [{\citenamefont {Longhi}(2022)}]{longhi_self-healing_2022}%
  \BibitemOpen
  \bibfield  {author} {\bibinfo {author} {\bibfnamefont {S.}~\bibnamefont {Longhi}},\ }\href {https://doi.org/10.1103/PhysRevLett.128.157601} {\bibfield  {journal} {\bibinfo  {journal} {Physical Review Letters}\ }\textbf {\bibinfo {volume} {128}},\ \bibinfo {pages} {157601} (\bibinfo {year} {2022})}\BibitemShut {NoStop}%
\bibitem [{\citenamefont {Fitzpatrick}\ \emph {et~al.}(2017)\citenamefont {Fitzpatrick}, \citenamefont {Sundaresan}, \citenamefont {Li}, \citenamefont {Koch},\ and\ \citenamefont {Houck}}]{fitzpatrick_observation_2017}%
  \BibitemOpen
  \bibfield  {author} {\bibinfo {author} {\bibfnamefont {M.}~\bibnamefont {Fitzpatrick}}, \bibinfo {author} {\bibfnamefont {N.~M.}\ \bibnamefont {Sundaresan}}, \bibinfo {author} {\bibfnamefont {A.~C.~Y.}\ \bibnamefont {Li}}, \bibinfo {author} {\bibfnamefont {J.}~\bibnamefont {Koch}},\ and\ \bibinfo {author} {\bibfnamefont {A.~A.}\ \bibnamefont {Houck}},\ }\href {https://doi.org/10.1103/PhysRevX.7.011016} {\bibfield  {journal} {\bibinfo  {journal} {Physical Review X}\ }\textbf {\bibinfo {volume} {7}},\ \bibinfo {pages} {011016} (\bibinfo {year} {2017})}\BibitemShut {NoStop}%
\end{thebibliography}%


\begin{thebibliography}{15}%
\makeatletter
\providecommand \@ifxundefined [1]{%
 \@ifx{#1\undefined}
}%
\providecommand \@ifnum [1]{%
 \ifnum #1\expandafter \@firstoftwo
 \else \expandafter \@secondoftwo
 \fi
}%
\providecommand \@ifx [1]{%
 \ifx #1\expandafter \@firstoftwo
 \else \expandafter \@secondoftwo
 \fi
}%
\providecommand \natexlab [1]{#1}%
\providecommand \enquote  [1]{``#1''}%
\providecommand \bibnamefont  [1]{#1}%
\providecommand \bibfnamefont [1]{#1}%
\providecommand \citenamefont [1]{#1}%
\providecommand \href@noop [0]{\@secondoftwo}%
\providecommand \href [0]{\begingroup \@sanitize@url \@href}%
\providecommand \@href[1]{\@@startlink{#1}\@@href}%
\providecommand \@@href[1]{\endgroup#1\@@endlink}%
\providecommand \@sanitize@url [0]{\catcode `\\12\catcode `\$12\catcode `\&12\catcode `\#12\catcode `\^12\catcode `\_12\catcode `\%12\relax}%
\providecommand \@@startlink[1]{}%
\providecommand \@@endlink[0]{}%
\providecommand \url  [0]{\begingroup\@sanitize@url \@url }%
\providecommand \@url [1]{\endgroup\@href {#1}{\urlprefix }}%
\providecommand \urlprefix  [0]{URL }%
\providecommand \Eprint [0]{\href }%
\providecommand \doibase [0]{https://doi.org/}%
\providecommand \selectlanguage [0]{\@gobble}%
\providecommand \bibinfo  [0]{\@secondoftwo}%
\providecommand \bibfield  [0]{\@secondoftwo}%
\providecommand \translation [1]{[#1]}%
\providecommand \BibitemOpen [0]{}%
\providecommand \bibitemStop [0]{}%
\providecommand \bibitemNoStop [0]{.\EOS\space}%
\providecommand \EOS [0]{\spacefactor3000\relax}%
\providecommand \BibitemShut  [1]{\csname bibitem#1\endcsname}%
\let\auto@bib@innerbib\@empty
\bibitem [{\citenamefont {Hatano}\ and\ \citenamefont {Nelson}(1996)}]{hatano_localization_1996-si}%
  \BibitemOpen
  \bibfield  {author} {\bibinfo {author} {\bibfnamefont {N.}~\bibnamefont {Hatano}}\ and\ \bibinfo {author} {\bibfnamefont {D.~R.}\ \bibnamefont {Nelson}},\ }\href {https://doi.org/10.1103/PhysRevLett.77.570} {\bibfield  {journal} {\bibinfo  {journal} {Physical Review Letters}\ }\textbf {\bibinfo {volume} {77}},\ \bibinfo {pages} {570} (\bibinfo {year} {1996})}\BibitemShut {NoStop}%
\bibitem [{\citenamefont {Kawabata}\ \emph {et~al.}(2019)\citenamefont {Kawabata}, \citenamefont {Shiozaki}, \citenamefont {Ueda},\ and\ \citenamefont {Sato}}]{kawabata_symmetry_2019-si}%
  \BibitemOpen
  \bibfield  {author} {\bibinfo {author} {\bibfnamefont {K.}~\bibnamefont {Kawabata}}, \bibinfo {author} {\bibfnamefont {K.}~\bibnamefont {Shiozaki}}, \bibinfo {author} {\bibfnamefont {M.}~\bibnamefont {Ueda}},\ and\ \bibinfo {author} {\bibfnamefont {M.}~\bibnamefont {Sato}},\ }\href {https://doi.org/10.1103/PhysRevX.9.041015} {\bibfield  {journal} {\bibinfo  {journal} {Physical Review X}\ }\textbf {\bibinfo {volume} {9}},\ \bibinfo {pages} {041015} (\bibinfo {year} {2019})}\BibitemShut {NoStop}%
\bibitem [{\citenamefont {Ashida}\ \emph {et~al.}(2020)\citenamefont {Ashida}, \citenamefont {Gong},\ and\ \citenamefont {Ueda}}]{ashida_non-hermitian_2020-si}%
  \BibitemOpen
  \bibfield  {author} {\bibinfo {author} {\bibfnamefont {Y.}~\bibnamefont {Ashida}}, \bibinfo {author} {\bibfnamefont {Z.}~\bibnamefont {Gong}},\ and\ \bibinfo {author} {\bibfnamefont {M.}~\bibnamefont {Ueda}},\ }\href {https://doi.org/10.1080/00018732.2021.1876991} {\bibfield  {journal} {\bibinfo  {journal} {Advances in Physics}\ }\textbf {\bibinfo {volume} {69}},\ \bibinfo {pages} {249} (\bibinfo {year} {2020})}\BibitemShut {NoStop}%
\bibitem [{\citenamefont {Bergholtz}\ \emph {et~al.}(2021)\citenamefont {Bergholtz}, \citenamefont {Budich},\ and\ \citenamefont {Kunst}}]{bergholtz_exceptional_2021-si}%
  \BibitemOpen
  \bibfield  {author} {\bibinfo {author} {\bibfnamefont {E.~J.}\ \bibnamefont {Bergholtz}}, \bibinfo {author} {\bibfnamefont {J.~C.}\ \bibnamefont {Budich}},\ and\ \bibinfo {author} {\bibfnamefont {F.~K.}\ \bibnamefont {Kunst}},\ }\href {https://doi.org/10.1103/RevModPhys.93.015005} {\bibfield  {journal} {\bibinfo  {journal} {Reviews of Modern Physics}\ }\textbf {\bibinfo {volume} {93}},\ \bibinfo {pages} {015005} (\bibinfo {year} {2021})}\BibitemShut {NoStop}%
\bibitem [{\citenamefont {Gohsrich}\ \emph {et~al.}(2025)\citenamefont {Gohsrich}, \citenamefont {Banerjee},\ and\ \citenamefont {Kunst}}]{gohsrich_non-hermitian_2025-1-si}%
  \BibitemOpen
  \bibfield  {author} {\bibinfo {author} {\bibfnamefont {J.~T.}\ \bibnamefont {Gohsrich}}, \bibinfo {author} {\bibfnamefont {A.}~\bibnamefont {Banerjee}},\ and\ \bibinfo {author} {\bibfnamefont {F.~K.}\ \bibnamefont {Kunst}},\ }\href {https://doi.org/10.1209/0295-5075/addf77} {\bibfield  {journal} {\bibinfo  {journal} {Europhysics Letters}\ }\textbf {\bibinfo {volume} {150}},\ \bibinfo {pages} {60001} (\bibinfo {year} {2025})}\BibitemShut {NoStop}%
\bibitem [{\citenamefont {Lee}(2016)}]{lee_anomalous_2016-si}%
  \BibitemOpen
  \bibfield  {author} {\bibinfo {author} {\bibfnamefont {T.~E.}\ \bibnamefont {Lee}},\ }\href {https://doi.org/10.1103/PhysRevLett.116.133903} {\bibfield  {journal} {\bibinfo  {journal} {Physical Review Letters}\ }\textbf {\bibinfo {volume} {116}},\ \bibinfo {pages} {133903} (\bibinfo {year} {2016})}\BibitemShut {NoStop}%
\bibitem [{\citenamefont {Lieu}(2018)}]{lieu_topological_2018-si}%
  \BibitemOpen
  \bibfield  {author} {\bibinfo {author} {\bibfnamefont {S.}~\bibnamefont {Lieu}},\ }\href {https://doi.org/10.1103/PhysRevB.97.045106} {\bibfield  {journal} {\bibinfo  {journal} {Physical Review B}\ }\textbf {\bibinfo {volume} {97}},\ \bibinfo {pages} {045106} (\bibinfo {year} {2018})}\BibitemShut {NoStop}%
\bibitem [{\citenamefont {Kunst}\ \emph {et~al.}(2018)\citenamefont {Kunst}, \citenamefont {Edvardsson}, \citenamefont {Budich},\ and\ \citenamefont {Bergholtz}}]{kunst_biorthogonal_2018-si}%
  \BibitemOpen
  \bibfield  {author} {\bibinfo {author} {\bibfnamefont {F.~K.}\ \bibnamefont {Kunst}}, \bibinfo {author} {\bibfnamefont {E.}~\bibnamefont {Edvardsson}}, \bibinfo {author} {\bibfnamefont {J.~C.}\ \bibnamefont {Budich}},\ and\ \bibinfo {author} {\bibfnamefont {E.~J.}\ \bibnamefont {Bergholtz}},\ }\href {https://doi.org/10.1103/PhysRevLett.121.026808} {\bibfield  {journal} {\bibinfo  {journal} {Physical Review Letters}\ }\textbf {\bibinfo {volume} {121}},\ \bibinfo {pages} {026808} (\bibinfo {year} {2018})}\BibitemShut {NoStop}%
\bibitem [{\citenamefont {Yao}\ and\ \citenamefont {Wang}(2018)}]{yao_edge_2018-si}%
  \BibitemOpen
  \bibfield  {author} {\bibinfo {author} {\bibfnamefont {S.}~\bibnamefont {Yao}}\ and\ \bibinfo {author} {\bibfnamefont {Z.}~\bibnamefont {Wang}},\ }\href {https://doi.org/10.1103/PhysRevLett.121.086803} {\bibfield  {journal} {\bibinfo  {journal} {Physical Review Letters}\ }\textbf {\bibinfo {volume} {121}},\ \bibinfo {pages} {086803} (\bibinfo {year} {2018})}\BibitemShut {NoStop}%
\bibitem [{\citenamefont {Yin}\ \emph {et~al.}(2018)\citenamefont {Yin}, \citenamefont {Jiang}, \citenamefont {Li}, \citenamefont {L{\"u}},\ and\ \citenamefont {Chen}}]{yin_geometrical_2018-si}%
  \BibitemOpen
  \bibfield  {author} {\bibinfo {author} {\bibfnamefont {C.}~\bibnamefont {Yin}}, \bibinfo {author} {\bibfnamefont {H.}~\bibnamefont {Jiang}}, \bibinfo {author} {\bibfnamefont {L.}~\bibnamefont {Li}}, \bibinfo {author} {\bibfnamefont {R.}~\bibnamefont {L{\"u}}},\ and\ \bibinfo {author} {\bibfnamefont {S.}~\bibnamefont {Chen}},\ }\href {https://doi.org/10.1103/PhysRevA.97.052115} {\bibfield  {journal} {\bibinfo  {journal} {Physical Review A}\ }\textbf {\bibinfo {volume} {97}},\ \bibinfo {pages} {052115} (\bibinfo {year} {2018})}\BibitemShut {NoStop}%
\bibitem [{\citenamefont {Kato}(1966)}]{kato_perturbation_1966-si}%
  \BibitemOpen
  \bibfield  {author} {\bibinfo {author} {\bibfnamefont {T.}~\bibnamefont {Kato}},\ }\href@noop {} {\emph {\bibinfo {title} {Perturbation {{Theory}} for {{Linear Operators}}}}}\ (\bibinfo  {publisher} {Springer},\ \bibinfo {address} {New York},\ \bibinfo {year} {1966})\BibitemShut {NoStop}%
\bibitem [{\citenamefont {Gohsrich}\ \emph {et~al.}(2024)\citenamefont {Gohsrich}, \citenamefont {Fauman},\ and\ \citenamefont {Kunst}}]{gohsrich_exceptional_2024-si}%
  \BibitemOpen
  \bibfield  {author} {\bibinfo {author} {\bibfnamefont {J.~T.}\ \bibnamefont {Gohsrich}}, \bibinfo {author} {\bibfnamefont {J.}~\bibnamefont {Fauman}},\ and\ \bibinfo {author} {\bibfnamefont {F.~K.}\ \bibnamefont {Kunst}},\ }\href@noop {} {} (\bibinfo {year} {2024}),\ \Eprint {https://arxiv.org/abs/2403.12018} {arXiv:2403.12018} \BibitemShut {NoStop}%
\bibitem [{\citenamefont {Marques}\ and\ \citenamefont {Dias}(2022)}]{marques_generalized_2022-si}%
  \BibitemOpen
  \bibfield  {author} {\bibinfo {author} {\bibfnamefont {A.~M.}\ \bibnamefont {Marques}}\ and\ \bibinfo {author} {\bibfnamefont {R.~G.}\ \bibnamefont {Dias}},\ }\href {https://doi.org/10.1103/PhysRevB.106.205146} {\bibfield  {journal} {\bibinfo  {journal} {Physical Review B}\ }\textbf {\bibinfo {volume} {106}},\ \bibinfo {pages} {205146} (\bibinfo {year} {2022})}\BibitemShut {NoStop}%
\bibitem [{\citenamefont {McCann}(2026)}]{mccann_non-hermitian_2026-si}%
  \BibitemOpen
  \bibfield  {author} {\bibinfo {author} {\bibfnamefont {E.}~\bibnamefont {McCann}},\ }\href {https://doi.org/10.1103/hdzd-94qm} {\bibfield  {journal} {\bibinfo  {journal} {Physical Review B}\ }\textbf {\bibinfo {volume} {113}},\ \bibinfo {pages} {045403} (\bibinfo {year} {2026})}\BibitemShut {NoStop}%
\bibitem [{\citenamefont {Montag}\ and\ \citenamefont {Kunst}(2024)}]{montag_essential_2024-1-si}%
  \BibitemOpen
  \bibfield  {author} {\bibinfo {author} {\bibfnamefont {A.}~\bibnamefont {Montag}}\ and\ \bibinfo {author} {\bibfnamefont {F.~K.}\ \bibnamefont {Kunst}},\ }\href {https://doi.org/10.1063/5.0206211} {\bibfield  {journal} {\bibinfo  {journal} {Journal of Mathematical Physics}\ }\textbf {\bibinfo {volume} {65}},\ \bibinfo {pages} {122101} (\bibinfo {year} {2024})}\BibitemShut {NoStop}%
\end{thebibliography}%

\clearpage
\onecolumngrid

\renewcommand{\theequation}{S.\arabic{equation}}
\renewcommand{\thefigure}{S.\arabic{figure}}
\renewcommand{\thesection}{S.\Roman{section}}
\setcounter{equation}{0}
\setcounter{figure}{0}
\setcounter{section}{0}


\begin{center}
\textbf{\large Supplementary Information for:\\[2pt]
Spectral-topology-induced criticality in non-Hermitian fermionic metals}
\end{center}

\vspace{0.5cm}

To supplement the findings of the main text, we provide further models illustrating different facets of the described phenomenology.
First, we discuss the Hatano--Nelson model~\citesupp{hatano_localization_1996-si} to illustrate that the strict condition of having time-reversal symmetry~\citesupp{kawabata_symmetry_2019-si} is not necessary to define the dynamical topological index and the notion of half-filling, as the weaker spectral constraint $\{E\}=\{E^*\}$ is sufficient.
Furthermore, we illustrate that other spectral constraints may be employed.
Second, we discuss the anisotropic Su--Schrieffer--Heeger (SSH) model~\citesupp{ashida_non-hermitian_2020-si,bergholtz_exceptional_2021-si,gohsrich_non-hermitian_2025-1-si,lee_anomalous_2016-si,lieu_topological_2018-si,kunst_biorthogonal_2018-si,yao_edge_2018-si,yin_geometrical_2018-si} to illustrate the multi-band case, where the occurrence of exceptional points (EPs)~\citesupp{kato_perturbation_1966-si,ashida_non-hermitian_2020-si,bergholtz_exceptional_2021-si} enriches the phenomenology compared to the single-band case.
As the anisotropic SSH model shows a non-generic feature, namely, two closed spectral curves merging into a single closed spectral curve and vice versa, we also discuss a generalized SSH model~\citesupp{gohsrich_exceptional_2024-si,marques_generalized_2022-si,mccann_non-hermitian_2026-si}, where this is not the case.
We also use that model to illustrate the extended Brillouin zone scheme, which is necessary to define the dynamical topological index in the multi-band case.
And finally, we discuss two distinct coupled Hatano--Nelson models, one having time-reversal symmetry featuring topological phase transitions without EPs in a multi-band model, and the other one being pseudo-Hermitian featuring regions (instead of points) in parameter space, where the dynamical topological index is ill-defined.

For later use, let us define pseudo-Hermiticity, which entails $\mathcal{PT}$-symmetry and pseudo-Hermitian symmetry~\citesupp{montag_essential_2024-1-si}.
It is given by $\mathcal{H}(k)=\eta \mathcal{H}^\dagger(k) \eta^{-1}$ with $\eta$ being Hermitian and invertible.

\section{The Hatano--Nelson model}

The first model we discuss is the Hatano--Nelson model~\citesupp{hatano_localization_1996}.
Under periodic boundary conditions, it is given by
\begin{equation}
    \label{eq:hn}
    H^\mathrm{HN} = \big( \sum_{n=1}^{N-1} t_{1} c_{n}^\dagger c_{n+1} + t_{-1} c_{n+1}^\dagger c_{n} \big) + t_{1} c_{N}^\dagger c_{1} + t_{-1} c_{1}^\dagger c_{N},
\end{equation}
and has the Bloch Hamiltonian 
\begin{equation}
    \label{eq:hn-bloch}
    \mathcal{H}^\mathrm{HN}(k)=t_{1}\ee^{\ii k} + t_{-1}\ee^{-\ii k}.
\end{equation}
Since it is a single-band model, the energy band $E(k)$ directly corresponds to Bloch Hamiltonian, i.e., $E(k)=\mathcal{H}(k)$.
As discussed in the main text, for real $t_1$ and $t_{-1}$, this model is time-reversal symmetric.
That is $\mathcal{H}(-k)=\mathcal{T} \mathcal{H}^*(k) \mathcal{T}^\dagger$ with a unitary $\mathcal{T}$ satisfying $\mathcal{T}\mathcal{T}^*=\pm\mathbb{1}$ and the identity matrix~$\mathbb{1}$, corresponding to TRS in Ref.~\citesupp{kawabata_symmetry_2019-si}, not TRS$^\dagger$.
For the single-band case, $\mathcal{T}=1$, i.e., $\mathcal{H}(-k)=\mathcal{H}^*(k)$.

Furthermore, let us define pseudo-Hermiticity, which entails $\mathcal{PT}$-symmetry and pseudo-Hermitian symmetry~\citesupp{montag_essential_2024-1-si}.
It is given by $\mathcal{H}(k)=\eta \mathcal{H}^\dagger(k) \eta^{-1}$ with $\eta$ being Hermitian and invertible.
For single-band models, this reduces to $\mathcal{H}(k)=\mathcal{H}^\dagger(k)=\mathcal{H}^*(k)$, i.e., $\mathcal{H}(k)$ is real, which we do not cover as the dynamical topological index is ill-defined in this case.

In the main text, we discuss that the spectral constraint $\{E\} = \{E^*\}$ is enough to have a well-defined dynamical topological index and a reasonable notion of half-filling.
To exemplify this, we substitute $t_1 \to t_1 \ee^{\ii \phi}$ and $t_{-1} \to t_{-1} \ee^{-\ii\phi}$ in Eq.~\eqref{eq:hn-bloch} yielding $H(k,\phi)=t_{1}\ee^{\ii (k+\phi)} + t_{-1}\ee^{-\ii (k+\phi)}=E(k,\phi)$, and we restrict to real $t_1$, $t_{-1}$ and $\phi$. 
Changing $\phi$ leaves the full spectrum invariant, thus satisfying the spectral constraint $\{E\} = \{E^*\}$, while the system is neither time-reversal symmetric nor pseudo-Hermitian.
Figure~\ref{fig:hn}(a,b) illustrates this.

We want to also mention that other spectral constraints may be employed to have a reasonable notion of half-filling.
For example, $\{E\} = \{-E^*\}$ also results in a spectrum where the same number of states are above and below $\Im E(k)=0$.
This is exemplified in Fig.~\ref{fig:hn}(c,d), where we have introduced the complex hoppings $t_1 \to t_1 \ee^{\ii \varphi}$ and $t_{-1} \to t_{-1} \ee^{\ii\varphi}$.
This results in a spectrum, which is rotated by $\varphi$ around the origin in the complex plane, compared to the corresponding real hoppings, i.e., for $\varphi=0$.

\clearpage
\section{The SSH model and a generalized SSH model}

To discuss properties of multi-band models, let us introduce the anisotropic SSH model~\citesupp{ashida_non-hermitian_2020-si,bergholtz_exceptional_2021-si,gohsrich_non-hermitian_2025-1-si,lee_anomalous_2016-si,lieu_topological_2018-si,kunst_biorthogonal_2018-si,yao_edge_2018-si,yin_geometrical_2018-si}, and a generalized version of it, akin to the generalized Hatano--Nelson model employed in the main text~\citesupp{gohsrich_exceptional_2024-si}, featuring a generalized chiral symmetry~\citesupp{marques_generalized_2022-si}.

It is given by 
\begin{equation}
\label{eq:gen-ssh}
    H = \sum_n \left[ (t_1 + \gamma)\, a_{n}^\dagger b_{n+l-1} + t_2\, a_{n}^\dagger b_{n-r} + (t_1 - \gamma)\, b_{n}^\dagger a_{n-r+1} + t_2\, b_{n}^\dagger a_{n+l} \right],
\end{equation}
where $a_n^\dagger$ ($b_n^\dagger$) creates a fermion on sublattice $A$ ($B$) of the $n$-th unit cell, $t_1\pm \gamma$ denote anisotropic hopping amplitudes with $\gamma$ controlling the anisotropy, and $t_2$ is an isotropic hopping amplitude.
The integers $r$ and $l$ specify the range of rightward and leftward hoppings, respectively, such that $r=l=1$ recovers the conventional anisotropic SSH model.
In this case, $t_1\pm \gamma$ correspond to intracell hoppings and $t_2$ to intercell hoppings.
We note that this model is distinct from the SSH model with generalized chiral symmetry discussed in Ref.~\citenum{mccann_non-hermitian_2026-si}.

The Bloch Hamiltonian corresponding to Eq.~\eqref{eq:gen-ssh} reads
\begin{equation}
\mathcal{H}(k) =
\begin{pmatrix}
0 & (t_1 + \gamma)\, \ee^{\ii (l-1)k} + t_2\, \ee^{-\ii r k} \\
(t_1 - \gamma)\, \ee^{-\ii (r-1)k} + t_2\, \ee^{\ii l k} & 0
\end{pmatrix},
\end{equation}
which is time-reversal symmetric, with $\mathcal{T}=\mathbb{1}_2$ being the $2\times2$ identity matrix, for real $t_1$, $t_2$ and $\gamma$.

\subsection{The SSH model}

Figure~\ref{fig:ssh} shows the spectrum of the anisotropic SSH model ($r=l=1$) for $t_1=1$, $t_2=1/2$ and different $\gamma>0$.
Fig.~\ref{fig:ssh}(a) shows the complex spectrum with $\gamma=1/4$, which features two distinct closed spectral curves, each contributing with one maximum to the dynamical topological index, such that $\nudyn=2$ over all.
Panel (b) shows the imaginary part of the dispersion.

By increasing to $\gamma=1/2$, depicted in Fig.~\ref{fig:ssh}(c,d), one finds an EP at $k=\pi$ where the two spectral curves touch.
Due to the non-Hermitian degeneracy at the EP, the bands are non-Morse functions and thus $\nudyn$ is ill-defined, hinting at the potential for a topological phase transition.

Passing through this EP, the complex spectrum features a single loop, depicted in Fig.~\ref{fig:ssh}(e) with $\gamma=1$.
In this case, we find $\nudyn=1$, see also Fig.~\ref{fig:ssh}(f), showing the contributing maximum at $k=\pi$.
Thus, we tuned through a topological phase transition, enabled by a non-Morse degeneracy due to an EP.

Increasing $\gamma$ further yields another EP at $\gamma=3/2$, depicted in Fig.~\ref{fig:ssh}(g,h).
Again, this EP signals a topological phase transition, splitting the single closed spectral curve in Fig.~\ref{fig:ssh}(e) into two, as depicted in Fig.~\ref{fig:ssh}(i) with $\gamma = 7/4$.
This time, one of the two closed spectral curves is entirely above $\Im E(k)=0$, resulting in $\nudyn=0$.
Hence, due to this imaginary line gap~\citesupp{kawabata_symmetry_2019-si}, the system does not feature Fermi points and scale-invariant gapless excitations.
Thus, for this parameter choice, the system is not critical according to our definition in the main text.

\subsection{The generalized SSH model}

Let us turn our attention to the generalized SSH model with $l=2$ and $r=1$, illustrated in Fig.~\ref{fig:gen-ssh}.
As anticipated, we discuss this model to illustrate that while an EP signals a topological phase transition, it does not necessarily need to split or merge closed spectral curves as for the anisotropic SSH model.
Furthermore, we use this model to discuss the picture of going to an extended Brillouin zone scheme.

In Fig.~\ref{fig:gen-ssh}(a), we show the complex spectrum featuring the $4$-fold spectral symmetry as a result of the generalized chiral symmetry.
It is a single loop, and contributes with three maxima and two minima to a dynamical topological index of $\nudyn=1$.
This can also be inferred from panel (b), where we show the corresponding imaginary part of the dispersion over the first Brillouin zone.
In panel (c), we go to an extended Brillouin zone scheme by additionally showing the second Brillouin zone.
By picking one of the two functions, e.g., the green solid curve, one essentially combines the two imaginary bands in (b) into a smooth and periodic parametrization of the imaginary part in (a).
As this parametrization yields a Morse function, this reduces the problem to the single-band case.

Increasing to $\gamma=1$ yields EPs at $k=\pm\pi/2$, as shown in Fig.~\ref{fig:gen-ssh}(d,e).
As the individual complex bands are not smooth, the extended Brillouin zone scheme does not yield a smooth parametrization, which in turn prohibits the determination of the dynamical topological index and indicates a spectral phase transition.

In Fig.~\ref{fig:gen-ssh}(f-h), we increased to $\gamma=3/2$, where three maxima contribute to $\nudyn=3$, and thus show that the EP has enabled a topological phase transition.
Comparing (b), (e) and (g), we see the mechanism discussed in the main text and illustrated in Fig. 2(b) unfolding:
By increasing $\gamma$, the two minima around $k\approx\pm\pi/2$ contributing to $\nudyn$ in (b) `annihilate' the maxima below $\Im E(k)=0$ not contributing to $\nudyn$ via the non-Hermitian degeneracy at the EP in (e), increasing the dynamical topological index in (g) by two.

\section{Coupled Hatano--Nelson models}

Lastly, we want to discuss two different coupled Hatano--Nelson models with different symmetry-constraints.

\subsection{Model 1}

The first model has the Bloch-Hamiltonian
\begin{equation}
    \label{eq:c-hn-1}
    \mathcal{H}^{(1)}(k) = \begin{pmatrix}
        \mathcal{H}^\mathrm{HN}(k)+ \ii \gamma & \kappa \\
        \kappa & \mathcal{H}^\mathrm{HN}(k)- \ii \gamma
    \end{pmatrix},
\end{equation}
where $\gamma$ encodes onsite gain and loss, and $\kappa$ couples the two chains.
This model is time-reversal symmetric with $\mathcal{T}=\sigma_x$, where~$\sigma_x$ is the Pauli-$x$ matrix, for real parameters.
Spectra and imaginary parts of the dispersions are depicted in Fig.~\ref{fig:c-hn}(a-f), and we vary the onsite gain and loss $\gamma=5/4,\sqrt{2},2$ while fixing $\kappa=1$.

In Fig.~\ref{fig:c-hn}(a), with $\gamma=5/4$, the two closed spectral loops are intersecting, however, the imaginary part of the dispersion in panel (b) shows that the system is non-degenerate.
Each imaginary band contributes with a maximum at $k=\pi/2$ to $\nudyn=2$.

Increasing to $\gamma=\sqrt{2}$ makes the two spectral curves touch in Fig.~\ref{fig:c-hn}(c) at the origin.
Still, this does not correspond to a degeneracy as seen in panel (d), in contrast to the anisotropic SSH model discussed before, where the touching corresponds to a degeneracy.
In panel (d), the blue band touches $\Im E(k)=0$ at $k=-\pi/2$ and the orange band at $k=+\pi/2$.
As these points correspond to non-simple zeros, the dynamical topological index is not defined, hinting at a topological phase transition without an EP.
This is confirmed by further increasing to $\gamma=2$, illustrated in Fig.~\ref{fig:c-hn}(e,f).
The system is not critical for this parameter choice as $\nudyn=0$, similar to the anisotropic SSH model in Fig.~\ref{fig:ssh}(i,j).

\subsection{Model 2}
The second model we discuss has the Bloch Hamiltonian
\begin{equation}
    \label{eq:c-hn-2}
    \mathcal{H}^{(2)}(k) = \begin{pmatrix}
        \mathcal{H}^\mathrm{HN}(k)+ \ii \gamma & \kappa \\
        \kappa & \mathcal{H}^\mathrm{HN}(-k)- \ii \gamma
    \end{pmatrix},
\end{equation}
which closely resembles model 1, with the distinction that the bottom right matrix element is $\mathcal{H}^\mathrm{HN}(-k)$ instead of $\mathcal{H}^\mathrm{HN}(k)$.
Due to this change, the system is pseudo-Hermitian with $\eta=\sigma_x$ instead of time-reversal symmetric for real parameters.
Spectra and imaginary parts of the dispersions are depicted in Fig.~\ref{fig:c-hn}(g-l), where we vary the coupling $\kappa=1/2,1,3/2$ while fixing the onsite gain and loss $\gamma=1$.

For $\kappa=1/2$, the spectrum in Fig.~\ref{fig:c-hn}(g) resembles the one for the time-reversal symmetric model, Fig.~\ref{fig:c-hn}(e).
However, due to the different symmetry constraint, the imaginary part of the dispersion is qualitatively different, as the two imaginary bands in panel~(h) satisfy $f_1(k) = - f_2(k)$ compared to $f_1(k) = - f_2(-k)$ in Fig.~\ref{fig:c-hn}(f).
Again, the system is not critical, as $\nudyn=0$.

Increasing the coupling yields an EP at $\kappa=1$ and $k=\pi/2$, and the corresponding spectrum is depicted in Fig.~\ref{fig:c-hn}(i).
Interestingly, diagonalizing the corresponding Bloch Hamiltonian away from $k=\pi/2$ yields the imaginary bands in panel (j), which feature removable discontinuities at $k=\pi/2$.
Removing these removable discontinuities yields a smooth parametrization of the imaginary part of the dispersion.
This allows us to determine $\nudyn=1$, even though we technically do not define the dynamical topological index in the main text when $\mathcal{H}(k)$ is not diagonalizable for all $k$, indicating that some assumptions may be relaxed.

Further increasing the couplings, for example $\kappa=3/2$ as depicted in Fig.~\ref{fig:c-hn}(k-l), the system is not diagonalizable, and the bands depicted in Fig.~\ref{fig:c-hn}(l) are not smooth and thus non-Morse, rendering the dynamical topological index ill-defined.

\clearpage
\section*{Supplementary Figures}

\begin{figure*}[h]
    \centering
    \includegraphics[]{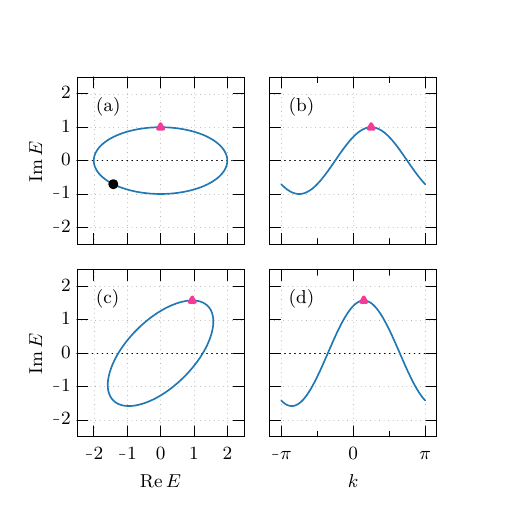}
    \caption{
    Spectra of the Hatano--Nelson model for different parameters.
    (a)~Complex spectrum and (b)~imaginary dispersion with $t_1=1.5\ee^{\ii\pi/4}$ and $t_{-1}=0.5\ee^{-\ii\pi/4}$.
    The black dot in (a) corresponds to $k=-\pi$.
    Even though the system does not feature time-reversal symmetry or pseudo-Hermiticity, it satisfies the spectral constraint $\{E\} = \{E^*\}$, leading to a well-defined dynamical topological index and notion of half-filling.
    (c)~Complex spectrum and (d)~imaginary dispersion with $t_1=1.5\ee^{\ii\pi/4}$ and $t_{-1}=0.5\ee^{\ii\pi/4}$, which corresponds to the spectrum with $t_1=1.5$ and $t_{-1}=0.5$ rotated by $\pi/4$ in the complex plane. 
    This is captured by the spectral constraint $\{E\} = \{-E^*\}$, which illustrates that other spectral constraints can lead to a well-defined dynamical topological index and a reasonable notion of half-filling.
    In all panels, the pink triangles correspond to the maxima contributing to the dynamical topological index, yielding $\nudyn=1$ in all depicted cases.
    }
    \label{fig:hn}
\end{figure*}

\begin{figure*}
    \centering
    \includegraphics[]{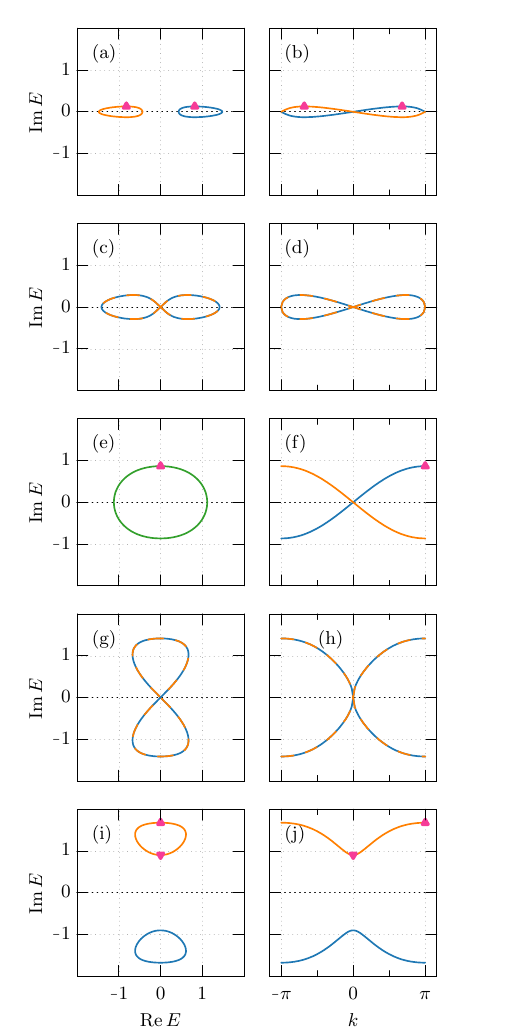}
    \caption{
    Spectra of the non-Hermitian SSH model for different parameters.
    The left column (a,c,e,g,i) shows complex spectra and the right column (b,d,f,h,j) shows imaginary dispersions, with $t_1=1$, $t_2=1/2$, and $\gamma=1/4,1/2,1,3/2,7/4$, respectively.
    (a,b) For $\gamma=1/4$, the spectrum (a) consists of two loops in orange and blue, which correspond to the two bands in the imaginary dispersion in (b).
    Again in all panels, pink triangles mark extrema contributing to the dynamical topological index, in this case, $\nudyn=2$.
    (c,d) At an EP at $\gamma=1/2$, the Hamiltonian is non-diagonalizable at $k=\pi$, which is indicated by blue-orange-dashed lines.
    This manifests in (d), as the bands are non-differentiable at this point, breaking smoothness and thus making the functions non-Morse.
    The dynamical topological index is ill-defined in this case.
    (e,f) Between the two EPs, e.g., at $\gamma=1$ as illustrated, the spectrum (e) consists of a single loop shown in green.
    The bands in (f) show one distinct maximum, thus $\nudyn=1$.
    (g,h) At the second EP at $\gamma=3/2$, the bands in (h) are non-differentiable at $k=0$.
    (i,j) For $\gamma=7/4$, the spectrum (i) again consists of two loops with smooth imaginary dispersions.
    As there is one maximum and one minimum above zero, also shown in (j), the dynamical topological index vanished, corresponding to a non-critical phase.
    }
    \label{fig:ssh}
\end{figure*}

\begin{figure*}
    \centering
    \includegraphics[]{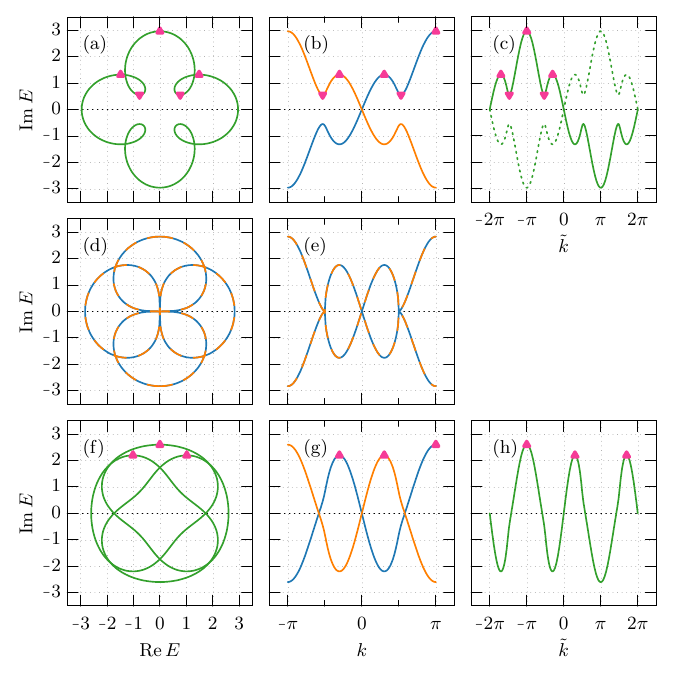}
    \caption{
    Spectra of a generalized SSH model for different parameters.
    The left column (a,d,f) shows complex spectra, all having a $4$-fold rotational symmetry as a result of a generalized chiral symmetry, and the center column (b,e,g) the imaginary dispersions with $l=2$, $r=1$, $t_1=1$, $t_2=2$, and $\gamma = 1/2,1,3/2$, respectively.
    The right column (c,h) corresponds to the extended Brillouin zone of (b,g), respectively.
    (a,b,c)~For $\gamma=1/2$, the complex spectrum in (a) consists of a single loop.
    Both bands in (b) are Morse functions.
    Employing an extended Brillouin zone scheme, in this case, showing the first two Brillouin zones in (c), one gets two bands, depicted as green solid or green dashed curves, which are Morse functions.
    By choosing any of those, the problem effectively reduces to the single-band case.
    Together with the real dispersion (not depicted), this yields a parametrization of the full spectrum in (a).
    In this case $\nudyn=1$.
    (d,e) At the EP at $\gamma=1$, while retaining the spectral symmetry in (d), the band structure is not smooth and thus does not allow determining the dynamical topological index.
    (f,g,h) For $\gamma=3/2$, the dynamical topological index changed to $\nudyn=3$, however, the spectrum (f) again consists of a single loop with smooth bands in (g) and smooth bands in the extended Brillouin zone (only one depicted) in (h).
    In all panels, the pink triangles correspond to the extrema contributing to the dynamical topological index.
    }
    \label{fig:gen-ssh}
\end{figure*}

\begin{figure*}
    \centering
    \includegraphics[]{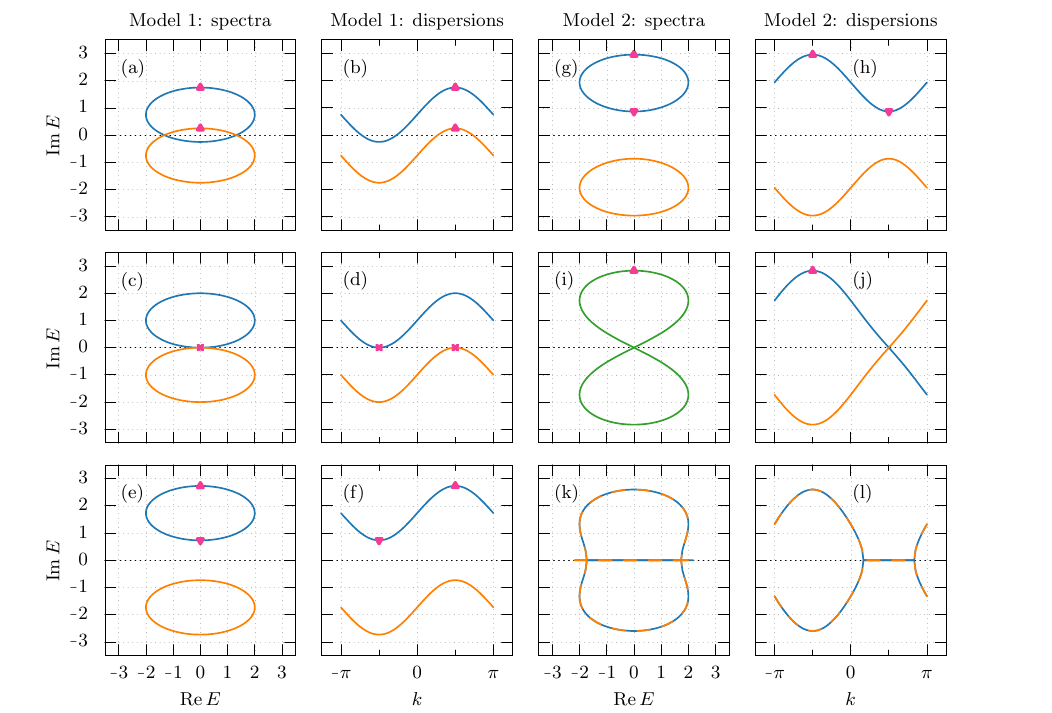}
    \caption{
    Complex spectra of the two different coupled Hatano--Nelson models.
    For both models we set $t_1=3/2$ and $t_{-1}=1/2$.
    The first column (a,c,e) shows complex spectra and the second column (b,d,f) the imaginary dispersions of model 1, Eq.~\eqref{eq:c-hn-1}, with coupling $\kappa=1$ and onsite gain and loss $\gamma=5/4,\sqrt{2},2$, respectively.
    Increasing $\gamma$ induces a topological phase transition at $\gamma=\sqrt{2}$ without an exceptional point.
    This phase-transition is marked by two non-simple zeros at $k=\pm\pi/2$ in (d), which are indicated by pink crosses in (c,d).
    The dynamical topological index is $\nudyn=2$ in (a,b), ill-defined in (c,d), and $\nudyn=0$ in (e,f).
    In the latter case, the system is not critical according to the definition in the main text.
    The third column (g,i,k) shows complex spectra and the last column (h,j,l) the imaginary dispersions of model 2, Eq.~\eqref{eq:c-hn-2}, with onsite gain and loss $\gamma=2$ and coupling $\kappa=1/2,1,3/2$, respectively.
    Similar to panels (e,f), the system is not critical as $\nudyn=0$.
    For $\kappa=1$ in (i,j), the system exhibits an EP at $k=\pi/2$, thus technically disallows determining the dynamical topological index as defined in the main text.
    However, this point corresponds to a removable discontinuity at $k=\pi/2$ in the imaginary part of the band structure in (j), and removing it yields smooth imaginary bands, which in turn allows determining $\nudyn=1$.
    Increasing $\kappa$ further, for example $\kappa=3/2$ as depicted in (k,l), the system stays defective.
    In this case, the bands in (l) are not smooth, and thus, the dynamical topological index is ill-defined.
    }
    \label{fig:c-hn}
\end{figure*}

\clearpage

\bibliographystylesupp{apsrev4-2}
\bibliographysupp{references-si.bib}

\end{document}